\newtheorem{lemma}{Lemma}
\newtheorem{theorem}{Theorem}
\begin{document}
	
	\vspace{-2ex}
\captionsetup[figure]{name={Fig.},labelsep=period}
\captionsetup[table]{
	labelsep=newline,
	justification=centering,
	singlelinecheck=false,
	font=sc}
	
\title{Design and Performance Analysis of Wireless Legitimate Surveillance Systems with Radar Function}
\author{Mianyi Zhang,~\IEEEmembership{Student Member,~IEEE,} Yinghui He,~\IEEEmembership{Student Member,~IEEE,} Yunlong Cai,~\IEEEmembership{Senior Member,~IEEE,} Guanding Yu,~\IEEEmembership{Senior Member,~IEEE,} and Naofal Al-Dhahir,~\IEEEmembership{Fellow,~IEEE}
\vspace{-2ex}
\thanks{
The work of Y. Cai was supported in part by the National Natural Science Foundation of China under Grants 61971376, U22A2004, and 61831004.
The work of G. Yu was supported in part by GDNRC[2021]32. The work of N. Al-Dhahir was supported by Erik Jonsson Distinguished Professorship at UT-Dallas.
	
M. Zhang, Y. He, Y. Cai, and G. Yu are with the College of Information Science and Electronic Engineering, Zhejiang University, Hangzhou 310007, China, and also with Zhejiang Provincial Key Laboratory of Information Processing, Communication and Networking (IPCAN), Hangzhou 310007, China (email: \{mianyi\_zhang, 2014hyh, ylcai, yuguanding\}@zju.edu.cn).
	
N. Al-Dhahir is with the Department of Electrical and Computer Engineering, The University of Texas at Dallas, Richardson, TX 75080 USA (e-mail: aldhahir@utdallas.edu).
}
}

\maketitle
\vspace{-4ex}
\begin{abstract}
	Integrated sensing and communication (ISAC) has recently been considered as a promising approach to save spectrum resources and reduce hardware cost. Meanwhile, as information security becomes increasingly more critical issue, government agencies urgently need to legitimately monitor suspicious communications via proactive eavesdropping. Thus, in this paper, we investigate a wireless legitimate surveillance system with radar function. We seek to jointly optimize the receive and transmit beamforming vectors to maximize the eavesdropping success probability which is transformed into the difference of signal-to-interference-plus-noise ratios (SINRs) subject to the performance requirements of radar and surveillance. The formulated problem is challenging to solve. By employing the Rayleigh quotient and fully exploiting the structure of the problem, we apply the divide-and-conquer principle to divide the formulated problem into two subproblems for two different cases. For the first case, we aim at minimizing the total transmit power, and for the second case we focus on maximizing the jamming power. For both subproblems, with the aid of orthogonal decomposition, we obtain the optimal solution of the receive and transmit beamforming vectors in closed-form. Performance analysis and discussion of some insightful results are also carried out. Finally, extensive simulation results demonstrate the effectiveness of our proposed algorithm in terms of eavesdropping success probability.
	

\begin{IEEEkeywords}

	Integrated sensing and communication, dual-functional radar-communication system, proactive eavesdropping, radar detection, transceiver optimization.
\end{IEEEkeywords}
\end{abstract}\label{key}

\vspace{0ex}

\section{Introduction}
With the rapid growth of the wireless communication industry, the carrier frequency in wireless communications has been pushed toward the frequency band commonly assigned to radar sensing systems due to the ever-increasing demand for higher communication rates \cite{survey1}. Additionally, beyond 5G and 6G wireless systems are required to provide high-accuracy sensing services in various applications \cite{survey2}, such as autonomous vehicle driving, robot navigation, and indoor localization for virtual reality. Correspondingly, future communication signals are desired to have high-resolution in both the time and angular domains, enabled by the millimeter wave (mmWave) and massive multiple-input multiple-output (MIMO) technologies. Therefore, integrated sensing and communication (ISAC) is considered to be a key technology in the next generation of wireless communication systems, because it not only can save spectrum resources, but also meet the demand for high-precision sensing using communication signals. To realize ISAC, the dual-functional radar-communication (DFRC) system is a promising approach since it can eliminate mutual interference and reduce hardware cost \cite{DF1,DF2,DF3,DF4}. It is noteworthy that different types of DFRC systems have been investigated in recent years, e.g., the waveform design for DFRC systems \cite{DF1}, \cite{RFse1}, robust orthogonal frequency division multiplexing (OFDM) DFRC systems \cite{ofdm1,ofdm2,ofdm3,ofdm4}, low probability of intercept (LPI) in DFRC systems \cite{lpi1,lpi2,lpi3,lpi4} and some other specific applications \cite{5g,wifi,amrfc}.

On the other hand, due to the open and broadcast nature of wireless signal transmission, information security is a common and critical challenge in wireless communications. Various physical layer security techniques as complements to traditional encryption have been widely studied to ensure information security of communications \cite{an,ab,key,sobf}. However, these techniques also benefit suspicious users who abuse communication resources to engage in illegal activities. Thus, it is essential for the government agencies to legitimately monitor suspicious communications \cite{gov}. To this end, passive eavesdropping is a basic approach, where the legitimate monitor directly listens to the suspicious links without transmitting any jamming signals. When the legitimate monitor is close to the suspicious transmitter, passive eavesdropping is efficient as the transmitted information can be decoded successfully. However, the legitimate monitor is generally far away from the suspicious transmitter to protect itself from exposure, which leads to a poor performance of passive eavesdropping due to the inferiority of the eavesdropping channel. To overcome this limitation, proactive eavesdropping via jamming is developed as an effective and legal approach for government agencies to prevent crimes. Specifically, the legitimate monitor purposely controls the suspicious communication rate via jamming to improve the eavesdropping efficiency \cite{pe1,pe2,pe3,pe4,peextra}. There has been a growing body of literature studying proactive eavesdropping under various system setups, including relay systems \cite{relay}, multi-antenna systems \cite{ma1,ma2,ma3}, unmanned aerial vehicle (UAV) assisted systems \cite{uav1}, wireless powered communication systems \cite{wpc}, and intelligent reflecting surface (IRS) enhanced wireless systems \cite{irs}.

As mentioned above, there is a strong need for government agencies to conduct legitimate eavesdropping. Moreover, for future legitimate monitor units, the sensing function will play a significant role in various applications, e.g., UAV flight tracking, autonomous vehicle driving, micro-robot navigation, etc. Inspired by these applications, it is important to study the joint design of proactive eavesdropping and radar sensing to improve the spectrum efficiency and meet the requirements of next-generation wireless communication systems. Several related works have been proposed in recent years. The authors of \cite{uav} considered a scenario where a legitimate UAV tracks suspicious UAV flights for preventing intended crimes. To enhance tracking accuracy, the legitimate UAV employing proactive eavesdropping and a tracking algorithm rather than radar sensing has been developed by utilizing eavesdropped information and channel state information (CSI). Meanwhile, the authors of \cite{secureradar} investigated the physical layer security in a MIMO DFRC system. This work assumes that a base station (BS) communicates with a number of legitimate users, and treats the radar target as a potential eavesdropper who can overhear the information from the BS. To ensure the transmission security, artificial noise has been employed to minimize the eavesdropping rate at the radar target.

However, the aforementioned studies do not focus on the joint design of proactive eavesdropping and radar sensing. In fact, to the best of our knowledge, there are few ISAC related papers for proactive eavesdropping in the literature. Motivated by this research gap, in this paper, we propose a DFRC system which integrates proactive eavesdropping and radar sensing. Specifically, we consider a full-duplex (FD) mode hardware-efficient legitimate monitor which is equipped with a large-scale antenna array and performs analog-digital (AD) hybrid beamforming. The hybrid beamforming structure consists of a baseband digital beamformer and a codebook-based analog beamformer. We seek to maximize the eavesdropping success probability by jointly optimizing the receive and transmit beamforming vectors subject to the performance requirements of radar and surveillance. We first transform the eavesdropping success probability into a difference of signal-to-interference-plus-noise ratios (SINRs) and formulate the optimization problem. Using the Rayleigh quotient and a deep analysis of the problem structure, we divide it into two subproblems for two different cases, which aim at minimizing the total transmit power and maximizing the jamming power. For both subproblems, we obtain the optimal receive and transmit beamforming vectors in closed-form with the aid of orthogonal decomposition. Then, the optimal beamforming algorithm is developed by combining these two cases. Moreover, we discuss some insightful results.

The specific contributions of this paper are summarized as follows.
\begin{itemize}
	\item[$\bullet$] We design a novel DFRC system which has not been studied before. To be specific, we introduce a wireless surveillance system with radar function, where the legitimate surveillance and radar sensing share the same spectrum and hardware. The legitimate monitor eavesdrops the suspicious transmission via smart jamming. At the same time, the legitimate monitor also works as a phased-array radar and sends probe signals to detect targets. Note that the legitimate monitor works in FD mode and is equipped with a large-scale antenna array. Additionally, the legitimate monitor performs the AD hybrid beamforming.
	
	\item[$\bullet$] We seek to optimize the receive and transmit beamforming vectors to maximize the eavesdropping success probability subject to the performance requirements of radar and surveillance. To make the problem solvable, we transform the probability into a difference of SINRs and formulate the optimization problem. Using the Rayleigh quotient, the optimal receive beamforming vectors are obtained and the optimization problem is simplified. However, the simplified problem is still highly coupled. Thus, by fully exploiting the structure of the simplified problem, we apply the divide-and-conquer principle to divide it into two subproblems for two cases. The first subproblem aims at minimizing the total transmit power and the second subproblem of the second case focuses on maximizing the jamming power.

	\item[$\bullet$] We propose an orthogonal basis construction based method to simplify the optimization problem for each case. Then, the problems can be converted into convex ones and the optimal receive and transmit beamforming vectors for each problem can be obtained in closed-form by analyzing the Karush-Kuhn-Tucker (KKT) conditions. By combining these two cases, the optimal beamforming algorithm is developed. Moreover, some insightful results regarding the optimal transmit power allocation are also discussed.

	\item[$\bullet$] We analyze the eavesdropping success probability according to the derived closed-form expressions of the optimal solutions. We find that there is a trade-off between the proactive eavesdropping performance and the radar detection performance and we quantify it. Finally, we provide extensive simulation results to verify the effectiveness of the proposed algorithm and analytical results.
\end{itemize}

The rest of this paper is organized as follows. Section II introduces the system model and formulates the corresponding problem for joint surveillance and radar sensing. The optimal algorithms for solving the formulated problem are introduced in Section III. Section IV provides a performance analysis and Section V presents the simulation results to validate the effectiveness of the proposed algorithm. Finally, Section VI concludes the paper.

{\it Notations}: In this paper, we use lower case letters, bold lower case letters, and bold upper case letters to denote scalars, vectors, and matrices, respectively. ${\left\| {\cdot} \right\|}$, $({\cdot})^{T}$, and $({\cdot})^{H}$ denote the Euclidean norm, transpose operator, and complex conjugate transpose operator, respectively. ${\mathbb C^{x \times y}}$ denotes the ${x \times y}$ complex space and $\bm{I}$ stands for the identity matrix. $\mathcal{CN}(\bm{0},\bm{A})$ denotes the distribution of a circularly symmetric complex Gaussian random variable with zero mean and covariance matrix $\bm{A}$. $\mathbb{E}\{x\}$ is the expectation of the random variable $x$ and $\mathcal{P}(\cdot)$ denotes the probability. Besides, $<{\bm \alpha}_1,{\bm \alpha}_2>$ represents the vector inner product of ${\bm \alpha}_1$ and ${\bm \alpha}_2$.


\section{System Model and Problem Formulation}

In this section, we introduce the proposed wireless legitimate surveillance system with radar function and formulate the optimization problem investigated in this paper.

\subsection{System Model}

\begin{figure}[htbp]
	\centering
	\includegraphics[scale=0.24]{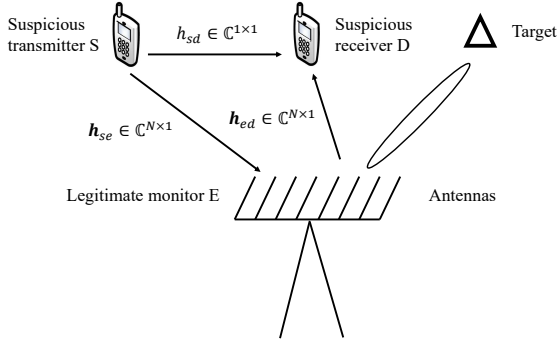}
		\vspace{0ex}
	\caption{A legitimate surveillance system with radar function.}
	\label{system}
	\vspace{-2ex}
\end{figure}

We consider a legitimate surveillance system including a pair of a suspicious transmitter and receiver and a legitimate monitor with radar function. Note that the identification of suspicious users can be implemented via big data analytics \cite{gov, identification1, identification2, identification3}, so it is regarded as prior information of our system. As shown in Fig. \ref{system}, the legitimate monitor E aims to proactively eavesdrop the suspicious transmission between transmitter S and receiver D. Meanwhile, since the radar function can contribute to the legitimate surveillance in several ways such as tracking the suspicious users, detecting potential obstacles, or assisting communications \cite{radar-assisted beamforming}, we assume that the legitimate monitor works as a phased-array radar and sends probe signals toward $N$ directions covering $180^{\circ}$ successively within a scan period. The suspicious transmitter and receiver are both equipped with a single antenna, and the legitimate monitor consisting of a large array of transmit and receive antennas. Besides, the legitimate monitor works in the FD mode to simultaneously eavesdrop and jam. Generally, we assume conventional far-field propagation with planar wavefront and quasi-static channel fading coefficients which remain unchanged during each transmission block. Let ${\bm h}_{se}\in {\mathbb C}^{N\times 1}, h_{sd}\in {\mathbb C}^{1\times 1}$, and ${\bm h}_{ed}\in {\mathbb C}^{N\times 1}$ denote the eavesdropping channel, suspicious channel, and jamming channel vectors, respectively. It should be noted that in practice, ${\bm h}_{se}$ and ${\bm h}_{ed}$ could be estimated by overhearing the pilot signals sent by S and D and $h_{s,d}$ could be obtained by overhearing the channel feedback sent from D to S \cite{csi_obtain}.

\vspace{-1ex}

\subsection{Transmit and Receive Process}

As shown in Fig. \ref{detailed}, both the transmit and receive modules of the legitimate monitor are equipped with $N$ antennas and $M$ $(M \ll N)$ radio frequency (RF) chains. We assume that the legitimate monitor detects $N$ directions successively within a scan period, i.e., it only probes one direction each time. Thus, $M-1$ RF chains are utilized for eavesdropping and only $1$ RF chain is employed for radar detection \cite{phased-array1}. For the transmit module, the transmit signal consists of the jamming signal and the probe signal for the $n$-th direction is given by
\begin{align}
{\bm x}_n(l)={\bm U}_n{\bm p}_{n,l} c_n(l),
\end{align}
where $c_n(l)$ denotes the transmit probe sequence with $\mathbb{E}\{|c_n(l)|^2\}=1$, $l=1,2,\ldots,L$ is the probe signal index and $L$ denotes the duration of each communication frame. Note that the jamming signal can share the same sequence with the probe signal. ${\bm U}_n=[{\bm u}_{n,1},{\bm u}_{n,2},\ldots,{\bm u}_{n,M}]\in {\mathbb C}^{N\times M}$ and ${\bm p}_{n,l}=[p_{n,l,1},p_{n,l,2},\ldots, p_{n,l,M}]^T\in {\mathbb C}^{M\times 1}$ denote the analog transmit beamforming matrix and the digital transmit beamforming vector when probing the $n$-th direction, respectively. Without loss of generality, we assume that the $M$-th RF chain is employed for radar detection, which means ${\bm u}_{n,1},\ldots,{\bm u}_{n,M-1}$ and $p_{n,l,1},\ldots,p_{n,l,M-1}$ are designed for proactive eavesdropping while ${\bm u}_{n,M}$ and $p_{n,l,M}$ are designed for radar detection. The analog beamforming matrix ${\bm U}_n$ is formulated based on a codebook. Similar to \cite{codebook1}, \cite{codebook2}, we adopt a discrete Fourier transformation (DFT) matrix as the codebook, denoted by ${\bm U} = [{{\bm \alpha}}({\theta}_1),\ldots, {{\bm \alpha}}({\theta}_N)]^T\in {\mathbb C}^{N\times N}$. ${\bm U}$ divides the space into $N$ beams where ${\bm \alpha}({\theta}_n) = [1,e^{j2\pi \frac{d}{\lambda}\text{sin}{\theta}_n},\ldots, $ $e^{j2\pi\frac{(N-1)d}{\lambda}\text{sin}{\theta}_n}]^T \in {\mathbb C}^{N\times 1}$ is the corresponding steering vector. Note that $d$ and $\lambda$ denote the antenna spacing and signal wavelength, respectively, and typically $d=\lambda /2$. Therefore, when probing ${\theta}_n$, ${\bm u}_{n,1},\ldots, {\bm u}_{n,M-1}$ are obtained by selecting the first $M-1$ steering vectors that maximize $|{{\bm \alpha}}^H({\theta}_n){\bm h}_{ed}|$ from the codebook, and we have ${\bm u}_{n,M}={{\bm \alpha}}^H({\theta}_n)$ to maximize the radar detection performance.

\begin{figure*}[htbp]
	\vspace{-10ex}
	\centering
	\includegraphics[scale=0.35]{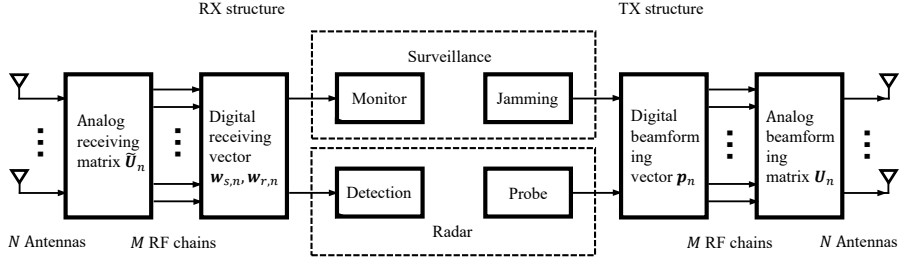}
	\vspace{-2ex}
	\caption{A detailed design of the legitimate monitor.}
	\label{detailed}
\end{figure*}

\begin{figure*}[htbp]
	\vspace{0ex}
	\centering
	\includegraphics[scale=0.35]{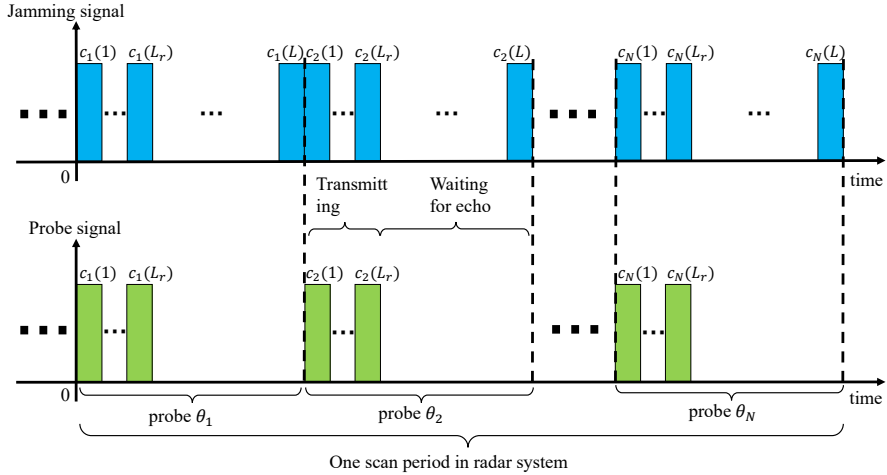}
		\vspace{0ex}
	\caption{Jamming signal and probe signal.}
	\label{time}
	\vspace{0ex}
\end{figure*}

As shown in Fig. \ref{time}, jamming lasts for a long time but one radar scan period is much shorter, while in radar detection direction ${\theta}_n$ changes rapidly. When probing ${\theta}_n$, the radar sends $L_r$ symbols for $l=1,\ldots, L_r$, leaving an interval to wait for the echo of the transmitted probe signal. Since ${\bm u}_{n,M}$ represents the beamforming vector for the radar detection, $p_{n,l,M}$ should be set as $0$ when $l=L_r+1,\ldots, L$ correspondingly, which means that there is no probe signal being transmitted. To distinguish ${\bm p}_{n,l}$ between these two different periods, we use ${\bm p}_{n,r}$ and ${\bm p}_{n,w}$ to denote ${\bm p}_{n,l}$ when $l \leq L_r$ and $L_r < l \leq L$, respectively, where ${\bm p}_{n,w}$ can be rewritten as ${\bm p}_{n,w}\triangleq {\bm \Sigma}\widetilde{\bm p}_{n,w}$ and ${\bm \Sigma} \in {\mathbb C}^{M\times M}$ denotes a diagonal matrix and its diagonal elements are all ones except the $M$-th element which is 0. Besides, the time ratios are defined as $\lambda_r\triangleq \frac{L_r}{L}$ and $\lambda_w\triangleq \frac{L-L_r}{L}$.

Then, the transmit signal of the legitimate monitor probing the $n$-th direction can be rewritten as
\begin{align}
{\bm x}_n(l)=
\begin{cases}
{\bm U}_n{\bm p}_{n,r} c_n(l) \quad& l \leq L_r,\\
{\bm U}_n{\bm \Sigma}\widetilde{\bm p}_{n,w} c_n(l) \quad& L_r < l \leq L.
\end{cases}
\end{align}

For the receive module, the received signals consisting of suspicious and radar echo signals are firstly processed by the analog receive beamforming matrix $\widetilde{{\bm U}}_n=[\widetilde{{\bm u}}_{n,1},\widetilde{{\bm u}}_{n,2},\ldots,\widetilde{{\bm u}}_{n,M}]\in {\mathbb C}^{N\times M}$. Similar to ${\bm U}_n$, when detecting ${\theta}_n$, the steering vectors corresponding to the first $M-1$ maximum elements of $|{\bm U}^H{\bm h}_{se}|$ are chosen to be $\widetilde{{\bm u}}_{n,1},\ldots,\widetilde{{\bm u}}_{n,M-1}$ and $\widetilde{{\bm u}}_{n,M}={{\bm \alpha}}({\theta}_n)$. Moreover, since the CSI of the self-interference link can be locally estimated at the legitimate monitor and the surveillance and radar share the same hardware and information, the self-interference can be significantly reduced to a low level based on certain interference cancellation techniques \cite{R1, R2, R3, R4, R5}. Hence, for simplicity, we assume that the self-interference at the legitimate monitor can be eliminated. Note that the same assumption has been widely adopted in other similar	works, such as \cite{R6}. Then, the received signal at the legitimate monitor E for the direction ${\theta}_n$ is given by
\begin{align}\label{y}
{\bm y}(l)=\underbrace{\widetilde{{\bm U}}_n^H{\bm h}_{se}\sqrt{p_s}s(l)}_{\text{suspicious signal}}+\underbrace{\widetilde{{\bm U}}_n^H{\bm A}_n{\bm U}_n{\bm p}_{n,l} c_n(l)}_{\text{radar echo}} \notag \\ +\underbrace{\widetilde{{\bm U}}_n^H\sum\limits_j{\bm A}_j{\bm U}_n{\bm p}_{n,l} c_n(l)}_{\text{radar clutter}}+{\bm n}(l),
\end{align}
where $s(l)$ denotes the $l$-th transmit symbol of the suspicious transmission with $\mathbb{E}\{|s(l)|^2\}=1$ and $p_s$ denotes the corresponding transmit power. Besides, ${\bm A}_n\triangleq {\beta}_n{{\bm \alpha}}({\theta}_n){{\bm \alpha}}^T({\theta}_n)\in \mathbb{C}^{N\times N}$ is the probing channel matrix for the potential target in direction $\theta_n$, where $\beta_n$ is the gain factor. Note that there are some non-target obstacles in all directions and let ${\bm A}_j$ denote the corresponding channel matrix of the $j$-th obstacle. ${\bm n}\sim \mathcal{CN}(\bm{0},\sigma^2 \bm{I})$ represents the additive white Gaussian noise (AWGN) with $\sigma^2$ denoting the noise power. Since high angular resolution of the radar detection can be achieved when $N$ is large, the clutter noise can be reduced to a relatively low level. Besides, due to the limited transmit power, the clutter noise cannot exceed its power upper bound. Thus, by employing the Central Limit Theorem, the sum of the clutters from all directions and the noise, i.e., $\widetilde{{\bm U}}_n^H\sum\limits_j{\bm A}_j{\bm U}_n{\bm p}_{n,l}+{\bm n}$, can be regarded as a Gaussian variable with the variance being $\widetilde{\sigma}^2$ \cite{clutter1, clutter2, clutter3}. Then, ${\bm y}$ is processed by the digital receive beamforming vectors ${\bm w}_{s,n}$ and ${\bm w}_{r,n}$ for the proactive eavesdropping and the radar detection, respectively.

Similarly, the received signal at the suspicious receiver D for direction ${\theta}_n$ can be written as
\begin{align}
 y_d(l)=\underbrace{h_{sd}\sqrt{p_s} s(l)}_{\text{suspicious signal}}+\underbrace{{\bm h}_{ed}^H{\bm U}_n{\bm p}_{n,l} c_n(l)}_{\text{jamming signal}}+n(l),
\end{align}
where $ n\sim \mathcal{CN}(0,\sigma^2)$ represents the AWGN with the noise variance being $\sigma^2$.


\subsection{SINR and Power Analysis}

To separate the suspicious and radar echo signals, ${\bm w}_{s,n}$ and ${\bm w}_{r,n}$ are designed in the null-spaces of $\widetilde{{\bm U}}_n^H{\bm A}_n{\bm U}_n$ and $\widetilde{{\bm U}}_n^H{\bm h}_{se}$, respectively. Therefore, we restructure ${\bm w}_{s,n}\triangleq {\bm Z}_{s,n}\widetilde{{\bm w}}_{s,n}$ and ${\bm w}_{r,n}\triangleq {\bm Z}_{r,n}\widetilde{{\bm w}}_{r,n}$, where we have $\widetilde{{\bm w}}_{s,n}$, $\widetilde{{\bm w}}_{r,n}\in {\mathbb C}^{(M-1)\times 1}$ denote the equivalent digital receive beamforming vectors, ${\bm Z}_{s,n}\triangleq \text{Null}\{\widetilde{{\bm U}}_n^H{\bm A}_n{\bm U}_n\}$, and ${\bm Z}_{r,n}\triangleq \text{Null}\{\widetilde{{\bm U}}_n^H{\bm h}_{se}\}\in {\mathbb C}^{M\times (M-1)}$. Note that ${\bm Z}_{s,n}$ and ${\bm Z}_{r,n}$ are normalized and always exist since the ranks of $\widetilde{{\bm U}}_n^H{\bm A}_n{\bm U}_n$ and $\widetilde{{\bm U}}_n^H{\bm h}_{se}$ are 1. With the null-space design, the principal part of the interference between the suspicious and radar echo signals can be eliminated.

Therefore, the SINRs of the legitimate monitor E, the suspicious receiver D, and the radar function are given as
\begin{align}
&\text{SINR}_\text{E} =\frac{\sum\limits^{N}_{n=1}|\widetilde{{\bm w}}_{s,n}^H{\bm Z}_{s,n}^H\widetilde{{\bm U}}_n^H{\bm h}_{se}|^2p_s}{\sum\limits^{N}_{n=1}\widetilde{\sigma}^2\widetilde{{\bm w}}_{s,n}^H\widetilde{{\bm w}}_{s,n}},\\
&\text{SINR}_\text{D} =
\!\frac{N|h_{sd}|^2p_s}{\sum\limits^{N}_{n=1}(\lambda_r|{\bm h}_{ed}^H{\bm U}_n{\bm p}_{n,r}|^2+\lambda_w|{\bm h}_{ed}^H{\bm U}_n{\bm \Sigma}\widetilde{\bm p}_{n,w}|^2)+N{\sigma}^2}\! ,\\
&\text{SINR}_\text{R} =\frac{|\widetilde{{\bm w}}_{r,n}^H{\bm Z}_{r,n}^H\widetilde{{\bm U}}_n^H{\bm A}_n{\bm U}_n{\bm p}_{n,r}|^2}{\widetilde{\sigma}^2\widetilde{{\bm w}}_{r,n}^H\widetilde{{\bm w}}_{r,n}}, \, \forall n=1,\ldots, N,
\end{align}
where $\text{SINR}_\text{E}$ and $\text{SINR}_\text{D}$ are computed over one scan period as shown in Fig. \ref{time}.

The total transmit power consumption can be written as
\begin{align}\label{total}
p_{\text{total}}=\sum\limits^{N}_{n=1}(\lambda_r\|{\bm p}_{n,r}\|^2+\lambda_w\|{\bm \Sigma}\widetilde{\bm p}_{n,w}\|^2).
\end{align}

\vspace{1ex}
\subsection{Problem Formulation}

In this paper, we aim at maximizing the probability of successful eavesdropping. However, if $\text{SINR}_\text{E}< \text{SINR}_\text{D}$, the legitimate monitor is not guaranteed to successfully eavesdrop the information. Therefore, we adopt the following indicator function to denote the event of successful eavesdropping as in \cite{pe4}
\begin{align}
Y=
\begin{cases}
	1,& \text{$\text{SINR}_\text{E}\ge \text{SINR}_\text{D}$},\\
	0,& \text{$\text{SINR}_\text{E}<\text{SINR}_\text{D}$}.
\end{cases}
\end{align}

Thus, we maximize $\mathbb{E}\{Y\}=\mathcal{P}(\text{SINR}_\text{E}\ge \text{SINR}_\text{D})$. To simplify the optimization objective, we can equivalently transform $\mathcal{P}(\text{SINR}_\text{E}\ge \text{SINR}_\text{D})$ into $\text{SINR}_\text{E} - \text{SINR}_\text{D}$ since they are positively correlated \cite{pe3, pe4}. Then, we can formulate the eavesdropping success probability maximization problem as
\begin{subequations} \label{p1}
	\begin{eqnarray}
		&\hspace{-2.5ex}\max\limits_{\left\{\substack{{\bm p}_{n,r}, \widetilde{\bm p}_{n,w},\\ \widetilde{{\bm w}}_{s,n},\widetilde{{\bm w}}_{r,n}}\right\}}&\hspace{-2ex} \frac{\sum\limits^{N}_{n=1}|\widetilde{{\bm w}}_{s,n}^H{\bm Z}_{s,n}^H\widetilde{{\bm U}}_n^H{\bm h}_{se}|^2p_s}{\sum\limits^{N}_{n=1}\widetilde{\sigma}^2\widetilde{{\bm w}}_{s,n}^H\widetilde{{\bm w}}_{s,n}} \notag \\
		&&\hspace{-4ex}-\frac{N|h_{sd}|^2p_s}{\sum\limits^{N}_{n=1}(\lambda_r|{\bm h}_{ed}^H{\bm U}_n{\bm p}_{n,r}|^2\hspace{-0.5ex}+\hspace{-0.5ex}\lambda_w|{\bm h}_{ed}^H{\bm U}_n{\bm \Sigma}\widetilde{\bm p}_{n,w}|^2)\hspace{-0.5ex}+\hspace{-0.5ex}N{\sigma}^2}\notag\\
		\\
		&\hspace{-2.5ex}\text{s.t.}&\hspace{-2ex}\sum\limits^{N}_{n=1}(\lambda_r\|{\bm p}_{n,r}\|^2+\lambda_w\|{\bm \Sigma}\widetilde{\bm p}_{n,w}\|^2)\leq p_{\text{max}},\\
		&&\hspace{-6ex}\frac{N|h_{sd}|^2p_s}{\sum\limits^{N}_{n=1}\hspace{-0.5ex}(\lambda_r|{\bm h}_{ed}^H{\bm U}_n{\bm p}_{n,r}|^2\hspace{-0.5ex}+\hspace{-0.5ex}\lambda_w|{\bm h}_{ed}^H{\bm U}_n{\bm \Sigma}\widetilde{\bm p}_{n,w}|^2)\hspace{-0.5ex}+\hspace{-0.5ex}N{\sigma}^2}\hspace{-0.5ex}\geq\hspace{-0.5ex} \gamma_s,\notag\\
		\\
		&&\hspace{-2ex}\frac{|\widetilde{{\bm w}}_{r,n}^H{\bm Z}_{r,n}^H\widetilde{{\bm U}}_n^H{\bm A}_n{\bm U}_n{\bm p}_{n,r}|^2}{\widetilde{\sigma}^2\widetilde{{\bm w}}_{r,n}^H\widetilde{{\bm w}}_{r,n}}\geq \gamma_r, \forall n=1,...,N,\notag \\
	\end{eqnarray}
\end{subequations}
where (\ref{p1}b) denotes the power constraint, (\ref{p1}c) guarantees the minimum essential monitoring rate and (\ref{p1}d) ensures the performance of radar detection. Specifically, the maximization of $\text{SINR}_\text{E}- \text{SINR}_\text{D}$ means that in the proactive eavesdropping system, the legitimate monitor allocates as much power as possible to the jamming signal towards the suspicious receiver. However, when the legitimate monitor successfully eavesdrops the information, the monitoring rate is equal to the transmission rate of the suspicious link \cite{csi_obtain}, which means that with the decrease of $\text{SINR}_\text{D}$, the transmission rate of the suspicious link decreases, i.e., the monitoring rate decreases. Thus, to maintain an essential monitoring rate, constraint (\ref{p1}c) guarantees the minimum of $\text{SINR}_\text{D}$.

Problem (\ref{p1}) is difficult to solve since (\ref{p1}b), (\ref{p1}c), and (\ref{p1}d) are highly coupled for ${\bm p}_{n,r}$ and $ \widetilde{\bm p}_{n,w}$. In the following section, we first obtain closed-form solutions for $\widetilde{{\bm w}}_{s,n}$ and $\widetilde{{\bm w}}_{r,n}$. To proceed, we fully exploit the structure of the problem and equivalently divide it into two subproblems based on two cases.


\section{Proposed Algorithm}

In this section, we first employ the Rayleigh quotient to obtain the optimal digital receive beamforming vectors and simplify the optimization problem. Then, by fully exploiting the structure of the simplified problem, we apply the divide-and-conquer principle to divide it into two subproblems. The first subproblem aims at minimizing the total transmit power and the second subproblem focuses on maximizing the jamming power. We propose an orthogonal basis construction based method to further simplify each subproblem. By checking the KKT conditions, the optimal digital transmit beamforming vectors for each subproblem can be obtained in closed-form. Then, we obtain the optimal beamforming algorithm by combining the solutions of these two subproblems.

	\vspace{-1.5ex}

\subsection{Reformulation of Problem (\ref{p1})}
By observing problem (\ref{p1}), $\widetilde{{\bm w}}_{s,n}$ only appears in the objective function. Thus, we decompose problem (\ref{p1}) and obtain the following problem for maximizing $\text{SINR}_\text{E}$ :
\begin{align}\label{p2}
	\max\limits_{\widetilde{{\bm w}}_{s,n}}\frac{\sum\limits^{N}_{n=1}|\widetilde{{\bm w}}_{s,n}^H{\bm Z}_{s,n}^H\widetilde{{\bm U}}_n^H{\bm h}_{se}|^2p_s}{\sum\limits^{N}_{n=1}\widetilde{\sigma}^2\widetilde{{\bm w}}_{s,n}^H\widetilde{{\bm w}}_{s,n}}.
\end{align}

Defining $\widetilde{\bm w}_{s}\triangleq [\widetilde{{\bm w}}_{s,1}^T,\ldots,\widetilde{{\bm w}}_{s,N}^T]^T \in {\mathbb C}^{(M-1)N\times 1}$ and $\widetilde{\bm h}_{s}\triangleq [({\bm Z}_{s,1}^H\widetilde{{\bm U}}_1^H{\bm h}_{se})^T,\ldots,({\bm Z}_{s,N}^H\widetilde{{\bm U}}_N^H{\bm h}_{se})^T]^T \in {\mathbb C}^{(M-1)N\times 1}$, problem (\ref{p2}) can be transformed to
\begin{align}\label{pp12}
	\max\limits_{\widetilde{\bm w}_{s}}\frac{|\widetilde{\bm w}_{s}^H\widetilde{\bm h}_{s}|^2}{\widetilde{\sigma}^2\widetilde{\bm w}_{s}^H\widetilde{\bm w}_{s}}.
\end{align}

Problem (\ref{pp12}) is a standard Rayleigh quotient maximization problem. Thus, we have following key result:
\begin{lemma} \label{lemma1}
	According to the Rayleigh quotient \cite[pp. 301--304]{rq}, a closed-form expression for the optimal $\widetilde{\bm w}_{s}$ is given by
	\begin{align}\label{ws}
	\widetilde{\bm w}_{s} = \frac{\widetilde{\bm h}_{s}}{\widetilde{\sigma}^2}.
	\end{align}
\end{lemma}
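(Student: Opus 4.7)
The plan is to recognize problem (\ref{pp12}) as a classical Rayleigh quotient with a rank-one Hermitian numerator form and a scaled-identity denominator form, and to handle it directly via the Cauchy--Schwarz inequality rather than by invoking the full generalized eigenvalue machinery. Since $\widetilde{\sigma}^2$ is a positive constant, it can be pulled outside the optimization, so the only structural features that matter are the direction of $\widetilde{\bm h}_{s}$ and the norm of $\widetilde{\bm w}_{s}$.

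First I would apply the Cauchy--Schwarz inequality to the numerator:
\begin{align}
|\widetilde{\bm w}_{s}^H \widetilde{\bm h}_{s}|^2 \;\leq\; (\widetilde{\bm w}_{s}^H \widetilde{\bm w}_{s})(\widetilde{\bm h}_{s}^H \widetilde{\bm h}_{s}),
\end{align}
with equality if and only if $\widetilde{\bm w}_{s}$ is a complex scalar multiple of $\widetilde{\bm h}_{s}$. Dividing both sides by $\widetilde{\sigma}^2 \widetilde{\bm w}_{s}^H \widetilde{\bm w}_{s}$ then yields the data-independent upper bound $\widetilde{\bm h}_{s}^H \widetilde{\bm h}_{s} / \widetilde{\sigma}^2$ on the objective of (\ref{pp12}). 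The second step is to verify that the claimed $\widetilde{\bm w}_{s} = \widetilde{\bm h}_{s} / \widetilde{\sigma}^2$ attains this bound, which is immediate by direct substitution since $\widetilde{\bm h}_{s} / \widetilde{\sigma}^2$ is a nonzero scalar multiple of $\widetilde{\bm h}_{s}$.

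The main subtlety, rather than a genuine obstacle, is that the Rayleigh quotient in (\ref{pp12}) is invariant under nonzero complex scalar rescaling of $\widetilde{\bm w}_{s}$, so the set of maximizers is in fact the entire one-dimensional complex ray $\{\alpha \widetilde{\bm h}_{s} : \alpha \in \mathbb{C}\setminus\{0\}\}$. The expression in (\ref{ws}) is therefore one particular representative of this optimal set; the specific scaling $1/\widetilde{\sigma}^2$ can be justified either as notational convenience or by the matched-filter interpretation in which the receive weights are normalized by the noise-plus-clutter variance. A completely equivalent alternative argument would bypass Cauchy--Schwarz and simply observe that (\ref{pp12}) is a generalized Rayleigh quotient for the pencil $(\widetilde{\bm h}_{s}\widetilde{\bm h}_{s}^H,\ \widetilde{\sigma}^2 \bI)$ whose unique nonzero generalized eigenvalue is $\|\widetilde{\bm h}_{s}\|^2/\widetilde{\sigma}^2$ with principal eigendirection along $\widetilde{\bm h}_{s}$, yielding the same conclusion.
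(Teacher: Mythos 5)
Your proof is correct and follows essentially the same route as the paper, which simply invokes the Rayleigh quotient result that your Cauchy--Schwarz argument proves for the rank-one pencil $\bigl(\widetilde{\bm h}_{s}\widetilde{\bm h}_{s}^H,\ \widetilde{\sigma}^2 \bI\bigr)$. Your added observation that the maximizer is only determined up to a nonzero complex scalar, so that $\widetilde{\bm h}_{s}/\widetilde{\sigma}^2$ is merely one convenient representative, is accurate and consistent with how the paper uses the result (only the optimal value of $\text{SINR}_\text{E}$ matters downstream).
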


Back to problem (\ref{p1}), apparently, $\widetilde{{\bm w}}_{r,n}$ only exists in constraint (\ref{p1}d) corresponding to the radar performance. Similar to $\widetilde{\bm w}_{s}$, we have following result:
\begin{lemma} \label{lemma2}
	The optimal $\widetilde{{\bm w}}_{r,n}$ in terms of ${\bm p}_{n,r}$ is given by
	\begin{align}\label{wr}
	\widetilde{\bm w}_{r,n} = \frac{{\bm Z}_{r,n}^H\widetilde{{\bm U}}_n^H{\bm A}_n{\bm U}_n{\bm p}_{n,r}}{\widetilde{\sigma}^2}.
	\end{align}
\end{lemma}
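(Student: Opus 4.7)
The plan is to mirror the derivation of Lemma~\ref{lemma1} almost verbatim, exploiting the fact that $\widetilde{\bm w}_{r,n}$ enters problem~(\ref{p1}) only through the radar constraint (\ref{p1}d) and does not appear in the objective, the power constraint (\ref{p1}b), or the monitoring-rate constraint (\ref{p1}c). Since (\ref{p1}d) is a ``$\geq \gamma_r$'' inequality on $\text{SINR}_\text{R}$, enlarging $\text{SINR}_\text{R}$ can only make the constraint easier to satisfy without perturbing anything else. Hence, for any fixed digital transmit vector ${\bm p}_{n,r}$, I would first maximize $\text{SINR}_\text{R}$ over $\widetilde{\bm w}_{r,n}$; if the resulting maximum meets $\gamma_r$ then (\ref{p1}d) is feasible, otherwise the outer problem is infeasible at that ${\bm p}_{n,r}$, so no generality is lost.

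Abbreviating $\bm a_n := {\bm Z}_{r,n}^H\widetilde{{\bm U}}_n^H{\bm A}_n{\bm U}_n{\bm p}_{n,r}$, the inner problem reduces to
\begin{align*}
\max_{\widetilde{\bm w}_{r,n}\neq\bm 0}\ \frac{|\widetilde{\bm w}_{r,n}^H \bm a_n|^2}{\widetilde{\sigma}^2\,\widetilde{\bm w}_{r,n}^H\widetilde{\bm w}_{r,n}},
\end{align*}
which is precisely the single-vector Rayleigh quotient already handled in Lemma~\ref{lemma1}. Applying Cauchy--Schwarz gives the upper bound $\|\bm a_n\|^2/\widetilde{\sigma}^2$, achieved with equality iff $\widetilde{\bm w}_{r,n}$ is collinear with $\bm a_n$. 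Therefore every nonzero scalar multiple of $\bm a_n$ is optimal, and in particular the choice $\widetilde{\bm w}_{r,n} = \bm a_n/\widetilde{\sigma}^2$ claimed in~(\ref{wr}) is optimal.

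There is essentially no obstacle beyond noticing that the scaling of $\widetilde{\bm w}_{r,n}$ is immaterial for $\text{SINR}_\text{R}$, so the specific normalization by $1/\widetilde{\sigma}^2$ in (\ref{wr}) is a convention (a matched-filter-type scaling) chosen to make the subsequent substitution of $\widetilde{\bm w}_{r,n}$ into (\ref{p1}d) yield a clean expression in ${\bm p}_{n,r}$ alone. The only care needed is to verify the degenerate case $\bm a_n = \bm 0$: this occurs only when ${\bm p}_{n,r}$ lies in the null space of ${\bm Z}_{r,n}^H\widetilde{{\bm U}}_n^H{\bm A}_n{\bm U}_n$, in which case the achievable $\text{SINR}_\text{R}$ is $0 < \gamma_r$ and the radar constraint is already violated regardless of $\widetilde{\bm w}_{r,n}$, so the formula in (\ref{wr}) remains consistent.
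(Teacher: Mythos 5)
Your proposal is correct and follows essentially the same route as the paper, which simply invokes the Rayleigh quotient argument of Lemma~\ref{lemma1} (``Similar to $\widetilde{\bm w}_{s}$\ldots'') without further elaboration. Your additional observations --- that enlarging $\text{SINR}_\text{R}$ is without loss of generality because $\widetilde{\bm w}_{r,n}$ appears only in the one-sided constraint (\ref{p1}d), and that the degenerate case $\bm a_n=\bm 0$ simply renders the constraint infeasible --- are sound and in fact make the argument more complete than the paper's.
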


With (\ref{ws}) and (\ref{wr}), problem (\ref{p1}) can be reformulated to the following problem:
\begin{subequations}\label{main}
	\begin{eqnarray}
		&\hspace{-3ex}\min\limits_{\left\{\substack{{\bm p}_{n,r}, \\ \widetilde{\bm p}_{n,w}}\right\}}&\hspace{-2.5ex}\!\frac{N|h_{sd}|^2p_s}{\sum\limits^{N}_{n=1}(\lambda_r|{\bm h}_{ed}^H{\bm U}_n{\bm p}_{n,r}|^2\hspace{-0.5ex}+\hspace{-0.5ex}\lambda_w|{\bm h}_{ed}^H{\bm U}_n{\bm \Sigma}\widetilde{\bm p}_{n,w}|^2)\hspace{-0.5ex}+\hspace{-0.5ex}N{\sigma}^2}\! \notag \\
		\\
		&\hspace{-3ex}\text{s.t.}&\hspace{-2.5ex}\sum\limits^{N}_{n=1}(\lambda_r\|{\bm p}_{n,r}\|^2+\lambda_w\|{\bm \Sigma}\widetilde{\bm p}_{n,w}\|^2)\leq p_{\text{max}},\\
		&&\hspace{-2.5ex}\!\frac{N|h_{sd}|^2p_s}{\sum\limits^{N}_{n=1}(\lambda_r|{\bm h}_{ed}^H{\bm U}_n{\bm p}_{n,r}|^2\hspace{-0.5ex}+\hspace{-0.5ex}\lambda_w|{\bm h}_{ed}^H{\bm U}_n{\bm \Sigma}\widetilde{\bm p}_{n,w}|^2)\hspace{-0.5ex}+\hspace{-0.5ex}N{\sigma}^2}\hspace{-0.5ex}\geq \hspace{-0.5ex} \gamma_s,\!\notag \\
		\\
		&&\hspace{-2.5ex}\|{\bm Z}_{r,n}^H\widetilde{{\bm U}}_n^H{\bm A}_n{\bm U}_n{\bm p}_{n,r}\|^2\geq \gamma_r{\widetilde{\sigma}^2}, \forall n=1,...,N,\notag \\
	\end{eqnarray}
\end{subequations}
where only $\text{SINR}_\text{D}$ is left in (\ref{main}a) since $\text{SINR}_\text{E}$ is maximized as a constant by applying (\ref{ws}).

Although only ${\bm p}_{n,r}$ and $\widetilde{\bm p}_{n,w}$ remain to be optimized, the problem is still challenging to be solved. Therefore, in the next subsection, we will discuss the classification of problem (\ref{main}).

	
\subsection{Classification Discussion}
Based on problem (\ref{main}), a significant contradiction exists between (\ref{main}a) and (\ref{main}c). Note that (\ref{main}a) requires the minimization of $\text{SINR}_\text{D}$, while (\ref{main}c) provides a lower bound constraint for $\text{SINR}_\text{D}$. Defining $p_{\text{th}}$ as the minimum total transmit power consumption, when $\text{SINR}_\text{D}$ reaches its lower bound $\gamma_s$ and the performance of radar detection is maintained. If $p_{\text{max}}\geq p_{\text{th}}$, the jamming signal should be allocated as much power as possible until $\text{SINR}_\text{D}$ reaches its lower bound $\gamma_s$. However, if $p_{\text{max}}< p_{\text{th}}$, the legitimate monitor will utilize all the power to jam the suspicious receiver and maintain the performance of radar detection. Therefore, this suggests that problem (\ref{main}) can be classified into two cases according to the relationship between $p_{\text{max}}$ and $p_{\text{th}}$. Without loss of generality, we first assume that $p_{\text{max}}\geq p_{\text{th}}$, then the monitoring rate constraint (\ref{main}c) becomes an equality constraint and the optimization objective in (\ref{main}a) turns into a constant equal to $\gamma_s$. Hence, we change the optimization objective into the total transmit power consumption in (\ref{main}b), and problem (\ref{main}) can be simplified into a power minimization problem:
\begin{subequations}\label{power}
	\begin{eqnarray}
		&\hspace{-2ex}\min\limits_{\{{\bm p}_{n,r}, \widetilde{\bm p}_{n,w}\}}&\sum\limits^{N}_{n=1}(\lambda_r\|{\bm p}_{n,r}\|^2+\lambda_w\|{\bm \Sigma}\widetilde{\bm p}_{n,w}\|^2)\\
		&\hspace{-2ex}\text{s.t.}&\sum\limits^{N}_{n=1}(\lambda_r|{\bm h}_{ed}^H{\bm U}_n{\bm p}_{n,r}|^2+\lambda_w|{\bm h}_{ed}^H{\bm U}_n{\bm \Sigma}\widetilde{\bm p}_{n,w}|^2)\notag \\
		&& =\frac{N|h_{sd}|^2p_s}{\gamma_s}-N{\sigma}^2,\\
		&&\|{\bm Z}_{r,n}^H\widetilde{{\bm U}}_n^H{\bm A}_n{\bm U}_n{\bm p}_{n,r}\|^2\geq \gamma_r{\widetilde{\sigma}^2}, \forall n=1,...,N.\notag \\
	\end{eqnarray}
\end{subequations}

After solving the power minimization problem (\ref{power}), $p_{\text{th}}$ can be obtained. Comparing $p_{\text{th}}$ with $p_{\text{max}}$, if $p_{\text{max}}\geq p_{\text{th}}$ is verified, then the optimal solution to the reformulated problem (\ref{main}) is obtained. However, if, on the contrary, $p_{\text{max}}< p_{\text{th}}$, it indicates that $\text{SINR}_\text{D}$ cannot be reduced to its lower bound $\gamma_s$. In this case, to fully utilize the finite power, the legitimate monitor allocates power to jamming signals as much as possible. Thus, the monitoring rate constraint (\ref{main}c) is always satisfied and the constraint (\ref{main}b) regarding the total transmit power consumption becomes an equality constraint. Then, we can transform problem (\ref{main}) into a jamming power maximization problem:
\begin{subequations}\label{jam}
	\begin{align}
	\max\limits_{\{{\bm p}_{n,r}, \widetilde{\bm p}_{n,w}\}}& \sum\limits^{N}_{n=1}(\lambda_r|{\bm h}_{ed}^H{\bm U}_n{\bm p}_{n,r}|^2+\lambda_w|{\bm h}_{ed}^H{\bm U}_n{\bm \Sigma}\widetilde{\bm p}_{n,w}|^2)\\
	\text{s.t.}\,\,&\sum\limits^{N}_{n=1}(\lambda_r\|{\bm p}_{n,r}\|^2+\lambda_w\|{\bm \Sigma}\widetilde{\bm p}_{n,w}\|^2)= p_{\text{max}},\\
	&\|{\bm Z}_{r,n}^H\widetilde{{\bm U}}_n^H{\bm A}_n{\bm U}_n{\bm p}_{n,r}\|^2\geq \gamma_r{\widetilde{\sigma}^2}, \, \forall n=1,\ldots, N.
	\end{align}
\end{subequations}

In summary, we classify problem (\ref{main}) into a power minimization problem (\ref{power}) and a jamming power maximization problem (\ref{jam}) corresponding to two different cases based on the relationship between $p_{\text{max}}$ and $p_{\text{th}}$. To solve the original problem (\ref{main}), we first obtain the optimal solution of problem (\ref{power}). Then, the minimum of the total transmit power consumption $p_{\text{th}}$ is compared with $p_{\text{max}}$. If $p_{\text{max}}\geq p_{\text{th}}$, the optimal solution to problem (\ref{main}) is obtained. Otherwise, this indicates that the original problem (\ref{main}) should be classified into the jamming power maximization problem (\ref{jam}). Thus, the optimal solution to problem (\ref{main}) is obtained by solving the jamming power maximization problem (\ref{jam}).

In the following subsection, we propose an orthogonal basis construction based algorithm to obtain the optimal solutions for the power minimization and jamming power maximization problems, respectively.

	\vspace{-3ex}
	
\subsection{Solution to the Power Minimization Problem}

Regarding (\ref{power}b), the left hand side (LHS) can be expressed via vector inner product as
\begin{subequations}\label{ip1}
	\begin{align}
	|{\bm h}_{ed}^H{\bm U}_n{\bm p}_{n,r}|=|<{\bm p}_{n,r},{\bm U}_n^H{\bm h}_{ed}>|,\\
	|{\bm h}_{ed}^H{\bm U}_n{\bm \Sigma}\widetilde{\bm p}_{n,w}|=|<\widetilde{\bm p}_{n,w},{\bm \Sigma}{\bm U}_n^H{\bm h}_{ed}>|.
	\end{align}
\end{subequations}

Similarly, with ${\bm A}_n={\beta}_n{{\bm \alpha}}({\theta}_n){{\bm \alpha}}^T({\theta}_n)$, the LHS of (\ref{power}c) can be rewritten as
\begin{align}\label{ip2}
\|{\bm Z}_{r,n}^H\widetilde{{\bm U}}_n^H{\bm A}_n{\bm U}_n{\bm p}_{n,r}\|^2\hspace{-0.5ex}=\hspace{-0.5ex}{\beta}_n^2g_n|\hspace{-0.5ex}<{\bm p}_{n,r},{\bm U}_n^H{{\bm \alpha}}^*({\theta}_n)>\hspace{-0.5ex}|^2,
\end{align}
where $g_n\triangleq {{\bm \alpha}}^H({\theta}_n)\widetilde{{\bm U}}_n{\bm Z}_{r,n}{\bm Z}_{r,n}^H\widetilde{{\bm U}}_n^H{{\bm \alpha}}({\theta}_n)$.

By observing (\ref{ip1}) and (\ref{ip2}), the vectors are of high significance as they represent the energy transmission directions. Hence, we define the following normalized direction bases:
\begin{subequations}\label{bases}
\begin{align}
	&{\bm v}_{\text{sum},n} = \frac{{\bm U}_n^H{\bm h}_{ed}}{\|{\bm U}_n^H{\bm h}_{ed}\|}, \\
	&{\bm v}_{\text{radar}} = \frac{{\bm U}_n^H{{\bm \alpha}}^*({\theta}_n)}{\|{\bm U}_n^H{{\bm \alpha}}^*({\theta}_n)\|}, \\
	&{\bm v}_{\text{jam}} = \frac{{\bm \Sigma}{\bm U}_n^H{\bm h}_{ed}}{\|{\bm \Sigma}{\bm U}_n^H{\bm h}_{ed}\|}.
\end{align}
\end{subequations}
The inner products between these direction bases can be written as
\begin{subequations}\label{basesip}
\begin{align}
	&<{\bm v}_{\text{sum},n},{\bm v}_{\text{radar}}> = \sqrt{\frac{g_{\text{radar},n}}{g_{\text{sum},n}}},\\ &<{\bm v}_{\text{sum},n},{\bm v}_{\text{jam}}> = \sqrt{\frac{g_{\text{jam}}}{g_{\text{sum},n}}}, \\
	&<{\bm v}_{\text{radar}},{\bm v}_{\text{jam}}> = 0,
\end{align}
\end{subequations}
where $	g_{\text{sum},n} = {\|{\bm U}_n^H{\bm h}_{ed}\|}^2$, $g_{\text{radar},n} = {|{{\bm \alpha}}^T({\theta}_n){\bm h}_{ed}|}^2$, $g_{\text{jam}} = {\|{\bm \Sigma}{\bm U}_n^H{\bm h}_{ed}\|}^2$, and we should note that $g_{\text{sum},n} = g_{\text{radar},n} + g_{\text{jam}}$.

The result of those inner products indicates that ${\bm v}_{\text{radar}}$ and ${\bm v}_{\text{jam}}$ are two orthogonal bases of the subspace determined by them. Besides, as $<{\bm v}_{\text{sum},n},{\bm v}_{\text{radar}}>^2 + <{\bm v}_{\text{sum},n},{\bm v}_{\text{jam}}>^2=1$, ${\bm v}_{\text{sum},n}$ is a linear combination of these two orthogonal bases which can be expressed as
\begin{align}\label{vsum}
	{\bm v}_{\text{sum},n} = \sqrt{\frac{g_{\text{jam}}}{g_{\text{sum},n}}}{\bm v}_{\text{jam}}+\sqrt{\frac{g_{\text{radar},n}}{g_{\text{sum},n}}}{\bm v}_{\text{radar}}.
\end{align}

	\begin{figure}[t]
		\vspace{-4ex}
		\centering
		\includegraphics[scale=0.25]{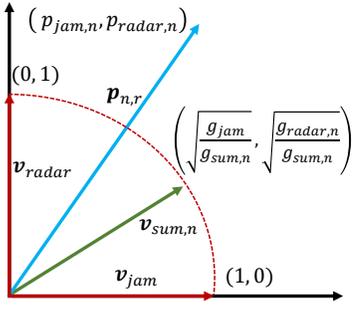}
		\vspace{-2ex}
		\caption{${\bm p}_{n,r}$ is a linear combination of ${\bm v}_{\text{jam}}$ and ${\bm v}_{\text{radar}}$.}
		\label{pfig}
		\vspace{-2ex}
	\end{figure}
	
According to (\ref{vsum}), an intuitive idea to simplify problem (\ref{power}) is decomposing ${\bm p}_{n,r}$ and $\widetilde{\bm p}_{n,w}$ into ${\bm v}_{\text{radar}}$, ${\bm v}_{\text{jam}}$, and other components which are orthogonal to the subspace determined by the two orthogonal bases. Thus, we have the following key result.
\begin{lemma} \label{lemma3}
	In problem (\ref{power}), to minimize the objective, ${\bm p}_{n,r}$ has no other components orthogonal to the subspace determined by ${\bm v}_{\text{radar}}$ and ${\bm v}_{\text{jam}}$. Meanwhile, it is always optimal to allocate no power to $\widetilde{\bm p}_{n,w}$, which means $\widetilde{\bm p}_{n,w}=\bm{0},\forall n=1,\ldots, N.$
\end{lemma}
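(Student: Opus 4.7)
The plan is a two-step orthogonal decomposition argument. First, I will show that the optimal ${\bm p}_{n,r}$ lies in $\text{span}({\bm v}_{\text{radar}},{\bm v}_{\text{jam}})$ by a direct cost-reduction step. Second, I will show that any nonzero $\widetilde{\bm p}_{n,w}$ can be strictly improved upon via an exchange argument, where the key mechanism is a cross term generated by the radar constraint forcing $a_n \ne 0$. The main obstacle will be the second step, because along the pure direction ${\bm v}_{\text{jam}}$ the two blocks ${\bm p}_{n,r}$ and $\widetilde{\bm p}_{n,w}$ have the \emph{same} marginal jamming-per-cost ratio; the strict dominance of ${\bm p}_{n,r}$ only emerges after tracking its interaction with the required radar-direction component.

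For step one, decompose ${\bm p}_{n,r}=a_n{\bm v}_{\text{radar}}+b_n{\bm v}_{\text{jam}}+{\bm p}_{n,r}^\perp$ with ${\bm p}_{n,r}^\perp$ orthogonal to both bases. By (\ref{vsum}), ${\bm v}_{\text{sum},n}$ sits in $\text{span}({\bm v}_{\text{radar}},{\bm v}_{\text{jam}})$, so the jamming term $g_{\text{sum},n}|<{\bm v}_{\text{sum},n},{\bm p}_{n,r}>|^2$ appearing in (\ref{power}b) depends only on $(a_n,b_n)$; by (\ref{ip2}), the radar-constraint left-hand side is a positive multiple of $|a_n|^2$. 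However the cost $\lambda_r\|{\bm p}_{n,r}\|^2=\lambda_r(|a_n|^2+|b_n|^2+\|{\bm p}_{n,r}^\perp\|^2)$ is strictly increasing in $\|{\bm p}_{n,r}^\perp\|$. Zeroing ${\bm p}_{n,r}^\perp$ therefore preserves feasibility while strictly decreasing the objective.

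For step two, decompose $\widetilde{\bm p}_{n,w}=c_n{\bm v}_{\text{jam}}+{\bm q}_n$ with ${\bm q}_n\perp{\bm v}_{\text{jam}}$. Since ${\bm \Sigma}^2={\bm \Sigma}$ and ${\bm v}_{\text{jam}}$ lies in the range of ${\bm \Sigma}$ by (\ref{bases}c), one obtains ${\bm \Sigma}{\bm v}_{\text{jam}}={\bm v}_{\text{jam}}$; the cross term in $\|{\bm \Sigma}\widetilde{\bm p}_{n,w}\|^2$ then vanishes and the cost reduces to $\lambda_w(|c_n|^2+\|{\bm \Sigma}{\bm q}_n\|^2)$, while the jamming term in (\ref{power}b) depends only on $c_n$. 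Hence ${\bm q}_n$ may be placed entirely in the null space of ${\bm \Sigma}$ (the span of the $M$-th standard basis vector), which is irrelevant to both objective and constraints. It remains to prove $c_n=0$. Assume for contradiction $c_n\ne 0$ at an optimum; then (\ref{power}c) forces $a_n\ne 0$, and after a phase alignment (without loss of generality, since both the jamming and radar constraints depend only on moduli of inner products) we take $a_n,b_n,c_n\ge 0$. Replace $(b_n,c_n)$ by $(b_n',0)$ where $b_n'$ is chosen to preserve the $n$-th jamming contribution,
\begin{align*}
\lambda_r(\sqrt{g_{\text{jam}}}\,b_n'+\sqrt{g_{\text{radar},n}}\,a_n)^2 = \lambda_r(\sqrt{g_{\text{jam}}}\,b_n+\sqrt{g_{\text{radar},n}}\,a_n)^2 + \lambda_w g_{\text{jam}}\, c_n^2.
\end{align*}
Setting $X=\sqrt{g_{\text{jam}}}\,b_n+\sqrt{g_{\text{radar},n}}\,a_n$ and $Y=\sqrt{g_{\text{jam}}}\,b_n'+\sqrt{g_{\text{radar},n}}\,a_n$ and factoring $Y^2-X^2=(Y-X)(Y+X)$, the cost gap satisfies
\begin{align*}
\lambda_r(b_n'^{\,2}-b_n^{\,2}) = \lambda_w c_n^2 \cdot \frac{\sqrt{g_{\text{jam}}}\,(b_n+b_n')}{\sqrt{g_{\text{jam}}}\,(b_n+b_n')+2\sqrt{g_{\text{radar},n}}\,a_n} < \lambda_w c_n^2,
\end{align*}
with strict inequality because $a_n>0$. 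The replacement is feasible for (\ref{power}c) (its left side was unaffected) and strictly cheaper in (\ref{power}a), contradicting optimality. Hence $c_n=0$, and combined with the ${\bm q}_n$ reduction above we obtain $\widetilde{\bm p}_{n,w}=\bm{0}$ at the optimum.
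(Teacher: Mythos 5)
Your proof is correct, but it takes a genuinely different route from the paper's. The paper (Appendix A) rotates to the orthonormal pair $({\bm v}_{\text{sum},n},{\bm v}_{\text{new},n})$ so that the jamming constraint (\ref{power}b) involves only the coefficient along ${\bm v}_{\text{sum},n}$, substitutes squared variables to convexify the problem, and then derives a contradiction from the KKT conditions: if the $\widetilde{\bm p}_{n,w}$ coefficient were active, stationarity would force $\mu_4=-1/g_{\text{jam}}$, which is incompatible with stationarity in $x_{\text{sum},n}$ precisely because $g_{\text{sum},n}>g_{\text{jam}}$ (see (\ref{contradiction1})). You instead stay in the $({\bm v}_{\text{radar}},{\bm v}_{\text{jam}})$ basis in which the lemma is phrased and run a primal exchange argument, where the strict saving comes from the cross term $2\sqrt{g_{\text{radar},n}}\,a_n$ generated by the radar-mandated component. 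Both arguments exploit the same underlying fact --- the effective jamming gain of ${\bm p}_{n,r}$ is $g_{\text{sum},n}=g_{\text{jam}}+g_{\text{radar},n}$, strictly larger than the $g_{\text{jam}}$ available to $\widetilde{\bm p}_{n,w}$ --- but yours makes the mechanism explicit and elementary (no duality), whereas the paper's convexification and KKT machinery is reused essentially verbatim for Theorem~\ref{theorem1} and so pays for itself in context. Two small points to tidy: your phase-alignment step deserves one sentence of justification (aligning the phases of $a_n,b_n$ raises the left side of the equality constraint (\ref{power}b), after which $b_n$ can be shrunk to restore it at strictly lower cost; the paper sidesteps this by writing nonnegative moduli from the outset in (\ref{pdcmp1})); and both your strict inequality and the paper's (\ref{contradiction1}) silently require $g_{\text{radar},n}>0$, i.e., the probing steering vector is not orthogonal to the jamming channel --- in the degenerate case your exchange is merely cost-neutral, and the lemma's conclusion still holds in the non-strict sense.
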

\begin{proof}
	See Appendix A.
\end{proof}

Lemma 3 shows that in problem (\ref{power}), the analog transmit beamformer only concentrates energy on the jamming direction and the detection direction. Therefore, in this problem only ${\bm p}_{n,r}$ remains to be optimized, which can be decomposed as
\begin{align}\label{pdcmp2}
{\bm p}_{n,r}=p_{\text{jam},n}{\bm v}_{\text{jam}}+p_{\text{radar},n}{\bm v}_{\text{radar}},
\end{align}
where $p_{\text{jam},n}$ and $p_{\text{radar},n}$ are scalars denoting the modulus of each direction as shown in Fig. \ref{pfig}.

Based on (\ref{pdcmp2}), we can simplify problem (\ref{power}) as follows:
\begin{subequations}\label{power2}
	\begin{align}
	\min\limits_{\{p_{\text{jam},n},p_{\text{radar},n}\} }&\sum\limits^{N}_{n=1}\lambda_r(p_{\text{jam},n}^2+p_{\text{radar},n}^2)\\
	\text{s.t.}\,\,&\sum\limits^{N}_{n=1}\lambda_r(p_{\text{jam},n}^2g_{\text{jam}}+p_{\text{radar},n}^2g_{\text{radar},n})\notag \\
	&= \frac{N|h_{sd}|^2p_s}{\gamma_s}-N{\sigma}^2,\\
	&p_{\text{radar},n}^2\geq {\frac{\gamma_r{\widetilde{\sigma}^2}}{{\beta}_n^2g_n}}, \, \forall n=1,\ldots, N,\\
	&p_{\text{jam},n}\geq 0,p_{\text{radar},n}\geq 0, \, \forall n=1,\ldots, N.
	\end{align}
\end{subequations}

For problem (\ref{power2}), by transforming it into a convex problem and applying the KKT conditions, we have the following important result.
\begin{theorem} \label{theorem1}
	In the power minimization problem (\ref{power2}), the closed-form expressions of optimal $p_{\text{jam},n}$ and $p_{\text{radar},n}$ are given by
	\begin{subequations}\label{close1}
		\begin{eqnarray}
			&\hspace{-2ex}p_{\text{jam},n}=& \sqrt{\frac{|h_{sd}|^2p_s-{\sigma}^2\gamma_s}{\gamma_s\lambda_rg_{\text{jam}}}-\frac{1}{Ng_{\text{jam}}}\sum\limits^{N}_{n=1}\frac{\gamma_r{\widetilde{\sigma}^2}g_{\text{radar},n}}{{\beta}_n^2g_n}},\notag \\
			&&\forall n=1,...,N,\\
			&\hspace{-2ex}p_{\text{radar},n}=& \sqrt{ \frac{\gamma_r{\widetilde{\sigma}^2}}{{\beta}_n^2g_n}}, \forall n=1,...,N.
		\end{eqnarray}
	\end{subequations}
\end{theorem}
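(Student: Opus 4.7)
The plan is to convexify problem (\ref{power2}) by a squared change of variables, derive the KKT conditions of the resulting linear program, and verify that the stated formulas correspond to a primal/dual optimal pair. First I would set $x_n \triangleq p_{\text{jam},n}^2$ and $y_n \triangleq p_{\text{radar},n}^2$; since $p_{\text{jam},n},p_{\text{radar},n}\geq 0$, this substitution is bijective, and it renders the objective, the equality constraint (\ref{power2}b), and the radar constraint (\ref{power2}c) all linear in $(x_n,y_n)$. The transformed problem is therefore an LP in $2N$ variables with one equality constraint and the bound constraints $x_n\geq 0$, $y_n\geq t_n$, where $t_n\triangleq \gamma_r\widetilde\sigma^2/(\beta_n^2 g_n)$. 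Assuming feasibility, strong duality holds and the KKT conditions are both necessary and sufficient for optimality.

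Next, introducing a multiplier $\mu\in\mathbb{R}$ for the equality and $\eta_n,\nu_n\geq 0$ for the two groups of bounds, the stationarity conditions read
\[
\eta_n=\lambda_r(1-\mu g_{\text{jam}}),\qquad \nu_n=\lambda_r(1-\mu g_{\text{radar},n}).
\]
I would then test the candidate $\mu^\star = 1/g_{\text{jam}}$, which makes $\eta_n=0$ (so any $x_n\geq 0$ is admissible under complementary slackness) while $\nu_n=\lambda_r(1-g_{\text{radar},n}/g_{\text{jam}})>0$. Complementary slackness on the second group then forces the radar constraint to be tight at every direction, yielding $p_{\text{radar},n}^2 = y_n = t_n$, which is exactly (\ref{close1}b).

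Plugging $y_n=t_n$ back into the equality constraint (\ref{power2}b) pins down the aggregate jamming budget $\sum_n x_n$; distributing it uniformly across $n$ and simplifying returns (\ref{close1}a). I would note in passing that any other nonnegative split of the same aggregate would attain the same objective, so the uniform allocation is a convenient representative of the optimal face rather than the unique minimizer. This also matches the intuition that the objective is symmetric across $n$ once the radar floor is enforced.

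The main obstacle, in my view, is justifying $\nu_n>0$, i.e.\ $g_{\text{jam}}>g_{\text{radar},n}$ for every $n$, cleanly; this inequality is what guarantees that the radar constraint binds at every direction and thereby pins $p_{\text{radar},n}$ to its lower bound. It follows from the construction of $\bm U_n$, whose first $M-1$ columns are selected from the DFT codebook so as to maximize the captured energy of $\bm h_{ed}$, while the $M$-th column is the single radar steering vector $\bm{\alpha}(\theta_n)$; once that codebook-based argument is in place, the remainder of the KKT bookkeeping is routine algebra.
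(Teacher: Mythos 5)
Your proposal matches the paper's Appendix~B proof essentially step for step: the same squared change of variables turns (\ref{power2}) into a linear program, and the same KKT analysis---hinging on $g_{\text{radar},n} < g_{\text{jam}}$, which the paper likewise asserts from the codebook construction without further proof---forces the radar constraint to bind at every $n$, after which the equality constraint yields $p_{\text{jam},n}$. The only cosmetic difference is that you exhibit the dual certificate $\mu^\star = 1/g_{\text{jam}}$ directly whereas the paper arrives at the same multiplier by a contradiction argument on $\mu_{4,n}=0$; your side remark that the uniform split of the aggregate jamming budget is merely one representative of an optimal face (not the unique minimizer) is correct and goes slightly beyond what the paper states.
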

\begin{proof}
	See Appendix B.
\end{proof}

By observing (\ref{close1}), since $p_{\text{jam},n}$ and $p_{\text{radar},n}$ represent the power allocated for the jamming and probe signals, respectively, (\ref{close1}b) indicates that in the power minimization problem, the probe signal is allocated an amount of power that exactly satisfies the SINR constraint of the radar detection performance. As the constraint of radar SINR becomes tighter, the noise interference increases, and the radar channel gain decreases, the system tends to allocate more power to the probe signal but still guarantees the equivalence of the radar SINR requirement. Meanwhile, (\ref{close1}a) shows that the probe signal also play a role in jamming the suspicious receiver. The reason is that, generally, the jamming channel is not necessarily orthogonal to the radar channel, which leads to the power leakage from the probe signal to the jamming channel. Besides the power leakage, the legitimate monitor additionally allocates as much power as possible to minimize the SINR at the suspicious receiver until it reaches the lower bound. The closed-form expression of $p_{\text{jam},n}$ reveals that the power allocated for the jamming signal is not only dependent on the suspicious channel, the suspicious signal transmit power, the minimum SINR requirement at the suspicious receiver, and the jamming channel gain, but also affected by the radar SINR constraint, the gain of the power leakage from the probe signal, and the radar channel gain indirectly.

Although the power minimization problem (\ref{power}) is solved and its optimal solution is obtained, it is essential to verify if the original problem (\ref{main}) can be classified as the power minimization problem. Thus, we have the following theorem:
\begin{theorem} \label{theorem2}
	In the power minimization problem (\ref{power}), the minimum total transmit power consumption is given by
\begin{align} \label{ptotal}
p_{\text{th}} = &\frac{N|h_{sd}|^2p_s-N{\sigma}^2\gamma_s}{\gamma_sg_{\text{jam}}}-\frac{\lambda_r}{g_{\text{jam}}}\sum\limits^{N}_{n=1}\frac{\gamma_r{\widetilde{\sigma}^2}g_{\text{radar},n}}{{\beta}_n^2g_n}\notag \\
&+\lambda_r\sum\limits^{N}_{n=1}\frac{\gamma_r{\widetilde{\sigma}^2}}{{\beta}_n^2g_n}.
\end{align}
\end{theorem}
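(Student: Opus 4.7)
The plan is to compute $p_{\text{th}}$ directly by substituting the optimal variables from Lemma 3 and Theorem 1 into the objective (\ref{power}a). Since Lemma 3 guarantees $\widetilde{\bm p}_{n,w}=\bm 0$ at optimality, the second sum $\lambda_w\|{\bm \Sigma}\widetilde{\bm p}_{n,w}\|^2$ vanishes, so that $p_{\text{th}} = \sum_{n=1}^{N}\lambda_r\|{\bm p}_{n,r}\|^2$. The remaining step is to evaluate $\|{\bm p}_{n,r}\|^2$ using the decomposition (\ref{pdcmp2}).

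First, I would invoke the orthogonality $\langle {\bm v}_{\text{radar}},{\bm v}_{\text{jam}}\rangle = 0$ established in (\ref{basesip}c) together with the fact that both direction bases are unit-norm by construction in (\ref{bases}). Since ${\bm p}_{n,r}=p_{\text{jam},n}{\bm v}_{\text{jam}}+p_{\text{radar},n}{\bm v}_{\text{radar}}$ is an expansion in an orthonormal pair, the Pythagorean identity gives $\|{\bm p}_{n,r}\|^2 = p_{\text{jam},n}^2 + p_{\text{radar},n}^2$. Hence $p_{\text{th}} = \lambda_r\sum_{n=1}^{N}\bigl(p_{\text{jam},n}^2 + p_{\text{radar},n}^2\bigr)$.

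Next, I would plug in the closed-form expressions (\ref{close1}a)–(\ref{close1}b) from Theorem 1. The key observation that makes the arithmetic clean is that the right-hand side of (\ref{close1}a) does not depend on the outer index $n$ — the only $n$-dependence sits inside a summation that is itself a constant. Therefore $\sum_{n=1}^{N}p_{\text{jam},n}^2$ reduces to $N$ times that constant, which yields $\tfrac{N(|h_{sd}|^2p_s-\sigma^2\gamma_s)}{\gamma_s\lambda_r g_{\text{jam}}} - \tfrac{1}{g_{\text{jam}}}\sum_{n=1}^{N}\tfrac{\gamma_r\widetilde{\sigma}^2 g_{\text{radar},n}}{\beta_n^2 g_n}$ after distributing and canceling the $N/N$. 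Multiplying by $\lambda_r$ clears the $\lambda_r$ in the denominator of the first term and leaves a $\lambda_r$ factor on the second term, matching exactly the first two summands of (\ref{ptotal}).

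Finally, $\lambda_r\sum_{n=1}^{N}p_{\text{radar},n}^2 = \lambda_r\sum_{n=1}^{N}\tfrac{\gamma_r\widetilde{\sigma}^2}{\beta_n^2 g_n}$ directly from (\ref{close1}b), producing the third summand in (\ref{ptotal}). Adding the two contributions gives the claimed expression for $p_{\text{th}}$. I do not anticipate a substantive obstacle here — the result is essentially a bookkeeping consequence of Lemma 3 and Theorem 1, with the only mildly subtle point being the orthonormality argument that turns $\|{\bm p}_{n,r}\|^2$ into the sum of squared scalar magnitudes and the observation that $p_{\text{jam},n}^2$ is constant in $n$.
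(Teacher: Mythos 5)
Your proposal is correct and follows exactly the paper's own (very terse) proof: substitute the optimal solution (\ref{close1}) and the decomposition (\ref{pdcmp2}) into the total power expression (\ref{total}), with $\widetilde{\bm p}_{n,w}=\bm 0$ from Lemma 3 killing the $\lambda_w$ term. Your added justification of the Pythagorean step via the orthonormality of ${\bm v}_{\text{jam}}$ and ${\bm v}_{\text{radar}}$ and the $n$-independence of $p_{\text{jam},n}^2$ is sound and merely makes explicit what the paper leaves implicit.
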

\begin{proof}
	By applying (\ref{total}), (\ref{pdcmp2}), and (\ref{close1}), the closed-form expression of $p_{\text{th}}$ is obtained.
\end{proof}

If $p_{\text{max}} > p_{\text{th}}$, then problem (\ref{main}) can be transformed into the power minimization problem (\ref{power}) and (\ref{close1}) is also the optimal solution to the original problem. Otherwise, this indicates that problem (\ref{main}) is equivalent to the jamming power maximization problem (\ref{jam}), in which the legitimate monitor makes full use of the finite power for jamming.

Regarding $p_{\text{th}}$, it is the sum of the radar transmission power and the jamming power. Since generally $g_{\text{jam}}$ is much greater than $g_{\text{radar},n}$, the power leakage $\dfrac{\lambda_r}{g_{\text{jam}}}\sum\limits^{N}_{n=1}\dfrac{\gamma_r{\widetilde{\sigma}^2}g_{\text{radar},n}}{{\beta}_n^2g_n}$ can be neglected. Thus, the threshold $p_{\text{th}}$ is affected by the lower bound of $\text{SINR}_\text{D}$ $\gamma_s$ and the lower bound of $\text{SINR}_\text{R}$ $\gamma_r$. With the decrease of $\gamma_s$ or the increase of $\gamma_r$, more power should be allocated for jamming or probing, which will lead to a high threshold. However, as mentioned above, only when the threshold $p_{\text{th}}$ is lower than $p_{\text{max}}$, can the original problem be classified into the power minimization problem (\ref{power}). Therefore, if $\gamma_s$ is too tight or $\gamma_r$ is too loose, the original problem (\ref{main}) is more likely to be classified into the jamming power maximization problem (\ref{jam}), which will be solved by the solution proposed in the following subsection.

	\vspace{-2.5ex}

\subsection{Solution to the Jamming Power Maximization Problem}

As the jamming power maximization problem (\ref{jam}) has a similar structure to the power minimization problem (\ref{power}), by constructing orthogonal bases based on (\ref{bases}) and (\ref{basesip}), we have the following theorem.
\begin{theorem} \label{theorem3}
	In the jamming power maximization problem (\ref{jam}), ${\bm p}_{n,r}$ has no other components orthogonal to the subspace determined by the set of direction bases. Meanwhile, it is always optimal to set $\widetilde{\bm p}_{n,w}=\bm{0},\forall n=1,\ldots, N$. Let us decompose ${\bm p}_{n,r}$ as follows:
	\begin{align}\label{pdcmp3}
	{\bm p}_{n,r}=\widetilde{p}_{\text{jam},n}{\bm v}_{\text{jam}}+\widetilde{p}_{\text{radar},n}{\bm v}_{\text{radar}}.
	\end{align}
	
	Then, the closed-form expressions of the optimal coefficients for ${\bm p}_{n,r}$ is given by
	\begin{subequations}\label{close2}
		\begin{align}
		&\widetilde{p}_{\text{jam},n}= \sqrt{\frac{p_{\text{max}}}{N\lambda_r}-\frac{1}{N}\sum\limits^{N}_{n=1}\frac{\gamma_r{\widetilde{\sigma}^2}}{{\beta}_n^2g_n}}, \, \forall n=1,\ldots, N,\\
		&\widetilde{p}_{\text{radar},n}= \sqrt{ \frac{\gamma_r{\widetilde{\sigma}^2}}{{\beta}_n^2g_n}}, \, \forall n=1,\ldots, N.
		\end{align}
	\end{subequations}
\end{theorem}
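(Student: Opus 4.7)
The plan is to mirror the structure used to prove Lemma 3 and Theorem 1, adapting the argument from the power-minimization setting to the jamming-maximization setting. The overall strategy has three stages: (a) reduce the search space for $\bm{p}_{n,r}$ to the two-dimensional subspace $\text{span}\{\bm{v}_{\text{jam}},\bm{v}_{\text{radar}}\}$, (b) show that $\widetilde{\bm{p}}_{n,w}=\bm{0}$ is without loss of optimality, and (c) solve the resulting scalar convex program via the KKT conditions to obtain (\ref{close2}).

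For (a), I would decompose $\bm{p}_{n,r}=\widetilde{p}_{\text{jam},n}\bm{v}_{\text{jam}}+\widetilde{p}_{\text{radar},n}\bm{v}_{\text{radar}}+\bm{p}_{n,r}^{\perp}$ with $\bm{p}_{n,r}^{\perp}$ orthogonal to both bases. From the inner-product identities (\ref{ip1})--(\ref{ip2}), both the objective (\ref{jam}a) and the radar constraint (\ref{jam}c) depend on $\bm{p}_{n,r}$ only through $\widetilde{p}_{\text{jam},n}$ and $\widetilde{p}_{\text{radar},n}$, whereas $\bm{p}_{n,r}^{\perp}$ still consumes $\lambda_r\|\bm{p}_{n,r}^{\perp}\|^{2}$ from the fixed power budget (\ref{jam}b). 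Moving that energy into $\widetilde{p}_{\text{jam},n}^{2}$ preserves feasibility and strictly increases the objective by $\lambda_r g_{\text{jam}}\|\bm{p}_{n,r}^{\perp}\|^{2}$, forcing $\bm{p}_{n,r}^{\perp}=\bm{0}$. For (b), Cauchy--Schwarz applied to $|\bm{h}_{ed}^{H}\bm{U}_n\bm{\Sigma}\widetilde{\bm{p}}_{n,w}|^{2}$ using $\bm{\Sigma}^{H}\bm{\Sigma}=\bm{\Sigma}$ gives an upper bound of $g_{\text{jam}}\|\bm{\Sigma}\widetilde{\bm{p}}_{n,w}\|^{2}$, which exactly matches the jamming-per-unit-power ratio achievable along $\bm{v}_{\text{jam}}$ in the $r$-slot. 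Transferring the $\lambda_w$-weighted $w$-slot energy into $\lambda_r\widetilde{p}_{\text{jam},n}^{2}$ therefore preserves the power budget and the jamming objective while leaving the radar constraint untouched, so $\widetilde{\bm{p}}_{n,w}=\bm{0}$ may be assumed WLOG.

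These reductions collapse (\ref{jam}) to a convex program in $\{\widetilde{p}_{\text{jam},n},\widetilde{p}_{\text{radar},n}\}$ with a linear equality budget and the per-$n$ radar floor $\widetilde{p}_{\text{radar},n}^{2}\geq\gamma_r\widetilde{\sigma}^{2}/(\beta_n^{2}g_n)$. I would then verify via the KKT conditions that the radar floor binds at every $n$: any slack $\widetilde{p}_{\text{radar},n}^{2}$ above the floor can be redirected into $\widetilde{p}_{\text{jam},n}^{2}$, and since the codebook selection of the top $M-1$ DFT columns for $\bm{h}_{ed}$ guarantees $g_{\text{jam}}\geq g_{\text{radar},n}$ in the regime of interest, the reallocation strictly improves the objective; this pins down (\ref{close2}b). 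Because $\bm{u}_{n,1},\dots,\bm{u}_{n,M-1}$ are the same $M-1$ DFT columns for every direction $n$, $g_{\text{jam}}$ is $n$-independent, and the residual budget $p_{\max}-\lambda_r\sum_n\gamma_r\widetilde{\sigma}^{2}/(\beta_n^{2}g_n)$ can be split symmetrically across the $N$ probing directions; enforcing the equality (\ref{jam}b) then yields (\ref{close2}a). The main obstacle I anticipate is the non-uniqueness in the distribution of the residual jamming budget (any split that exhausts the budget is optimal, so (\ref{close2}a) should be presented as the canonical symmetric optimizer rather than the unique one), together with carefully justifying the binding of the radar constraint, which hinges on the implicit ordering $g_{\text{jam}}\geq g_{\text{radar},n}$ induced by the codebook-based analog beam selection.
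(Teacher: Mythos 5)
Your proposal is correct and follows essentially the same route as the paper, whose own proof of Theorem 3 simply repeats the Appendix A/B machinery: decompose $\bm{p}_{n,r}$ and $\widetilde{\bm p}_{n,w}$ onto the direction bases, substitute squared variables to convexify, and use the KKT conditions together with $g_{\text{radar},n} < g_{\text{jam}} < g_{\text{sum},n}$ to force $\widetilde{\bm p}_{n,w}=\bm{0}$ and the binding of the per-direction radar constraint, after which the equality power budget yields (\ref{close2}). Your substitution of direct power-reallocation (exchange) arguments for part of that KKT contradiction, and your observation that the uniform split in (\ref{close2}a) is only one member of a family of optimizers because $g_{\text{jam}}$ is $n$-independent, are both sound and consistent with the paper's reasoning.
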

\begin{proof}
	The desired results can be obtained by following the same procedures shown in Appendix A and B.
\end{proof}

Based on (\ref{close2}), since the radar detection performance is not the optimization objective, the probe signal is still allocated the exact power to reach the performance requirement in the jamming power maximization problem (\ref{jam}). However, (\ref{close2}a) shows a different result compared to (\ref{close1}a). In the power minimization problem (\ref{power}), falls under case with enough energy, allocating more power for the probe signal indicates an increase in the radar power leakage, which helps jamming. On the contrary, in the jamming power maximization problem, allocating more power for the probe signal directly results in a decrease of the jamming power due to the limited total power. Moreover, (\ref{close1}a) indicates that the jamming power is limited by an upper bound determined by the minimum SINR requirement at the suspicious receiver, but (\ref{close2}a) shows that in the jamming power maximization problem, all power is allocated for jamming, except that which meets the radar detection performance requirement. The looser the radar detection performance constraint, the more power can be utilized to jam the suspicious receiver, which leads to a higher eavesdropping success probability, correspondingly.

	\vspace{-2ex}
	
\subsection{Overall Algorithm and Complexity Analysis}
In summary, since the original problem (\ref{main}) can be classified into two cases, there is no effect on the assumption that it is equivalent to the power minimization problem (\ref{power}) at first. To solve the power minimization problem (\ref{power}), we propose an orthogonal basis construction based algorithm and obtain the optimal solution in closed-form, which is shown in (\ref{close1}). To verify the assumption, the threshold $p_{\text{th}}$ calculated by (\ref{ptotal}) is compared with the power consumption upper bound $p_{\text{max}}$. If $p_{\text{max}} \geq p_{\text{th}}$, then $p_{\text{th}}$ is accessible and the obtained solution not only is optimal but also results in the minimum total power consumption. However, if $p_{\text{max}} < p_{\text{th}}$, then the assumption does not hold and the original problem is transformed into the jamming power maximization problem (\ref{jam}). By employing the orthogonal basis construction based algorithm, the solution in (\ref{close2}) is obtained, and it is also the optimal solution to the original problem. The overall algorithm for solving problem (\ref{main}) is summarized in Algorithm 1.
\begin{algorithm}[h]
	\caption{Orthogonal Basis Construction Based Algorithm for Problem (\ref{main})}
	\begin{algorithmic}[1]
		\State {\bf Initialize}:\ Calculate parameters $g_{\text{sum},n}$, $g_{\text{radar},n}$, and $g_{\text{jam}}$, $\forall n=1,\ldots, N$.
		\State Solve the power minimization problem (\ref{power}) and obtain the optimal solution according to (\ref{close1}).
		\State Calculate $p_{\text{th}}$ via (\ref{ptotal}) and compare it with $p_{\text{max}}$.
		\State {\bf if} $p_{\text{max}} \geq p_{\text{th}}$
		\State \quad Output the optimal solution of problem (\ref{main}).
		\State {\bf else}
		\State \quad Solve the jamming power maximization problem (\ref{jam}) and obtain the optimal solution based on (\ref{close2}).
		\State \quad Output the optimal solution of problem (\ref{main}).
		\State {\bf end}
		\label{algorithm 1}
	\end{algorithmic}
\end{algorithm}

We analyze the computational complexity of the proposed algorithm by evaluating the number of required
multiplications. Since all solutions are in closed-form, the computational complexity mainly comes from the matrix multiplications. Calculations of $g_{\text{sum},n}$, $g_{\text{radar},n}$, and $g_{\text{jam}}$, need $O(N^2M)$, $O(NM)$, and $O(N^2M)$ multiplications, respectively. The numbers of multiplications for computing (\ref{close1}a) and (\ref{close1}b) are $O(N^2)$ and $O(N)$, respectively. Then, since $p_{\text{total}}$ is calculated by accumulating $N$ square terms, it costs only $O(N)$ multiplications. Finally, the computation of the solutions based on (\ref{close2}) requires the same multiplications. Therefore, the total computational complexity of the proposed algorithm is $O(N^2M)$.

\section{Performance analysis}

The optimal solutions to the eavesdropping success probability maximization problem (\ref{p1}) are obtained in closed-form. Then, we focus on $\mathbb{E}\{Y\}=\mathcal{P}(\text{SINR}_\text{E}\ge \text{SINR}_\text{D})$ and analyze the eavesdropping success probability. We assume that $h_{sd}$ follows the zero-mean complex Gaussian random distribution with variance $\rho_{sd}$, the first $M-1$ elements of ${\bm \Sigma}\widetilde{{\bm U}}_n^H{\bm h}_{se}$ are identically and independently distributed (i.i.d.) zero-mean complex Gaussian random variables with variance $\rho_{se}$, and the first $M-1$ elements of ${\bm \Sigma}{\bm U}_n^H{\bm h}_{ed}$ are i.i.d. zero-mean complex Gaussian random variables with variance $\rho_{ed}$. The variances $\rho_{sd}$, $\rho_{se}$, and $\rho_{ed}$ represent the quality of each corresponding channel.

By following Lemma 1, $\text{SINR}_\text{E}$ can be simplified as
\begin{align}
\text{SINR}_\text{E}=\frac{p_s\|{\bm \Sigma}\widetilde{{\bm U}}_n^H{\bm h}_{se}\|^2}{\widetilde{\sigma}^2}.
\end{align}

Obviously, $\|{\bm \Sigma}\widetilde{{\bm U}}_n^H{\bm h}_{se}\|^2$ follows the Chi-square distribution with $2(M-1)$ degrees of freedom. Thus, the probability density function (PDF) of $\|{\bm \Sigma}\widetilde{{\bm U}}_n^H{\bm h}_{se}\|^2$ is given by
\begin{align}\label{fse}
f_{se}(x) = \frac{x^{M-2}}{\rho_{se}^{M-1}{\Gamma}(M-1)} e^{-\frac{x}{\rho_{se}}}.
\end{align}

In the power minimization problem (\ref{power}), $\text{SINR}_\text{D}$ equals $\gamma_s$, and the eavesdropping success probability can be transformed as
\begin{align}
\mathbb{E}\{Y\} = \mathcal{P}\left({\frac{p_s\|{\bm \Sigma}\widetilde{{\bm U}}_n^H{\bm h}_{se}\|^2}{\widetilde{\sigma}^2} \geq \gamma_s}\right).
\end{align}

Based on (\ref{fse}), the exact eavesdropping success probability is given by
\begin{align}\label{prob1}
\mathbb{E}\{Y\} = \int_{\frac{\widetilde{\sigma}^2\gamma_s}{p_s}}^{\infty}f_{se}(x)\text{d}x.
\end{align}

This shows that the eavesdropping success probability increases as the suspicious transmitter power increases, the noise decreases or $\text{SINR}_\text{D}$ decreases.

On the other hand, in the jamming power maximization problem (\ref{jam}), $\text{SINR}_\text{D}$ is given by
\begin{align}
\text{SINR}_\text{D}=\frac{|h_{sd}|^2p_s}{P_{\text{jam}}+\sigma^2},
\end{align}
where $P_{\text{jam}}= P_J g_{\text{jam}} + \dfrac{\gamma_r{\widetilde{\sigma}^2}}{{\beta}_n^2g_n}g_{\text{radar},n}$ and $P_J=\dfrac{p_{\text{max}}}{N\lambda_r}-\dfrac{1}{N}\sum\limits^{N}_{n=1}\dfrac{\gamma_r{\widetilde{\sigma}^2}}{{\beta}_n^2g_n}$.

Since $g_{\text{radar},n}$ is usually much smaller than $g_{\text{jam}}$, $g_{\text{radar},n}$ can be neglected, thus $P_{\text{jam}}= P_J g_{\text{jam}}$. Besides, since $g_{\text{jam}}={\|{\bm \Sigma}{\bm U}_n^H{\bm h}_{ed}\|}^2$, we assume that $g_{\text{jam}}$ follows the Chi-square distribution with $2(M-1)$ degrees of freedom. Similarly, the PDF of $g_{\text{jam}}$ can be written as
\begin{align}\label{fed}
f_{ed}(x) = \frac{x^{M-2}}{\rho_{ed}^{M-1}{\Gamma}(M-1)} e^{-\frac{x}{\rho_{ed}}}.
\end{align}

Meanwhile, $|h_{sd}|^2$ follows the exponential distribution with mean $\rho_{sd}$, and its PDF is given by
\begin{align}\label{fsd}
f_{sd}(x) = \frac{1}{\rho_{sd}} e^{-\frac{x}{\rho_{sd}}}.
\end{align}

Therefore, we have the following Lemma.
\begin{lemma} \label{lemma4}
	In the jamming power maximization problem (\ref{jam}), the exact eavesdropping success probability is given by
	\begin{align}
	\!\mathbb{E}\{Y\} =& 1-\sum\limits^{M-2}_{k=0}\frac{(-1)^k}{k!(M-k-2)}\left(\frac{\rho_{sd}\widetilde{\sigma}^2}{\rho_{se}\rho_{ed}P_J}\right)^{M-1}\notag\\
	&\times \hspace{-0.5ex}\text{exp}\left(\frac{{\sigma}^2}{\rho_{ed}P_J}\hspace{-0.5ex}+\hspace{-0.5ex}\frac{\rho_{sd}\widetilde{\sigma}^2}{\rho_{se}\rho_{ed}P_J}\right) \hspace{-0.5ex} \left(\frac{{\sigma}^2}{\rho_{ed}P_J}\hspace{-0.5ex}+\hspace{-0.5ex}\frac{\rho_{sd}\widetilde{\sigma}^2}{\rho_{se}\rho_{ed}P_J}\right)^k \notag\\
	&\times \hspace{-0.5ex}\Gamma \left(-k,\frac{{\sigma}^2}{\rho_{ed}P_J}\hspace{-0.5ex}+\hspace{-0.5ex}\frac{\rho_{sd}\widetilde{\sigma}^2}{\rho_{se}\rho_{ed}P_J}\right).\!
\end{align}
\end{lemma}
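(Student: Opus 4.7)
The plan is to write the event $\{\text{SINR}_\text{E}\geq \text{SINR}_\text{D}\}$ in terms of the three independent random variables $X\triangleq \|{\bm \Sigma}\widetilde{{\bm U}}_n^H{\bm h}_{se}\|^2$, $Y\triangleq g_{\text{jam}}={\|{\bm \Sigma}{\bm U}_n^H{\bm h}_{ed}\|}^2$, and $Z\triangleq |h_{sd}|^2$, and then peel off the expectation one variable at a time. Using Lemma 1 and the expression for $\text{SINR}_\text{D}$ given in (after the $g_{\text{radar},n}$ approximation), the event is equivalent to $Z\widetilde{\sigma}^2 \leq X(P_J Y + \sigma^2)$. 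Conditioning on $(X,Y)$ and invoking the exponential CDF of $Z$ from (\ref{fsd}), I obtain
\begin{align*}
\mathbb{E}\{Y\}=1-\mathbb{E}_{X,Y}\!\left\{\exp\!\left(-\frac{X(P_J Y+\sigma^2)}{\rho_{sd}\widetilde{\sigma}^2}\right)\right\}.
\end{align*}

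Next I would integrate out $X$, whose PDF $f_{se}$ in (\ref{fse}) is a Gamma density with shape $M-1$ and scale $\rho_{se}$. This reduces to the Laplace transform of a Gamma distribution and yields, after simplification, the clean closed form $\mathbb{E}_X\{\cdot\}=(\rho_{sd}\widetilde{\sigma}^2)^{M-1}/\bigl(\rho_{se}(P_J Y+\sigma^2)+\rho_{sd}\widetilde{\sigma}^2\bigr)^{M-1}$. Substituting this back, the problem reduces to computing a single integral of the form $\int_0^\infty y^{M-2}(ay+b)^{-(M-1)}e^{-y/\rho_{ed}}\,\mathrm{d}y$, where $a\triangleq \rho_{se}P_J$ and $b\triangleq \rho_{se}\sigma^2+\rho_{sd}\widetilde{\sigma}^2$; note that the quantity $c\triangleq b/(a\rho_{ed})$ coincides with the argument $\frac{\sigma^2}{\rho_{ed}P_J}+\frac{\rho_{sd}\widetilde{\sigma}^2}{\rho_{se}\rho_{ed}P_J}$ appearing in the statement.

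The main obstacle is the evaluation of this last integral, since $ay+b$ appears in the denominator with a general integer exponent and the integrand is not a standard tabulated form. My approach is the substitution $u=ay+b$, which transforms the range to $[b,\infty)$ and factors out $e^{c}/a^{M-1}$; then I expand $(u-b)^{M-2}$ by the binomial theorem and recognise each resulting integral $\int_b^\infty u^{-k-1}e^{-u/(a\rho_{ed})}\mathrm{d}u=(a\rho_{ed})^{-k}\,\Gamma(-k,c)$ as an upper incomplete gamma function. Collecting the prefactors $(\rho_{sd}\widetilde{\sigma}^2)^{M-1}/[\rho_{ed}^{M-1}\Gamma(M-1)]$ with $a^{-(M-1)}$ gives the $\bigl(\tfrac{\rho_{sd}\widetilde{\sigma}^2}{\rho_{se}\rho_{ed}P_J}\bigr)^{M-1}$ factor, while $\binom{M-2}{k}/\Gamma(M-1)=1/[k!(M-k-2)!]$ reproduces the summand's coefficient.

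Finally, I would collect the terms, check the sign (the $1-\mathbb{E}\{\cdot\}$ outside, times $(-1)^k$ from the binomial expansion of $(u-b)^{M-2}$), and confirm that the argument of $\Gamma(-k,\cdot)$ and the exponent $c$ in $\exp(c)$ both agree with the closed form claimed in the lemma. Writing this out carefully reproduces the stated expression; I expect no further subtleties once the incomplete-gamma representation of the $u$-integral is in place.
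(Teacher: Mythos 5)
Your proposal follows essentially the same route as the paper's Appendix C: integrate out the exponential variable $|h_{sd}|^2$ first via its CDF, then the Gamma-distributed $\|{\bm \Sigma}\widetilde{{\bm U}}_n^H{\bm h}_{se}\|^2$ via its Laplace transform, and finally evaluate the remaining integral over $g_{\text{jam}}$ by the shift substitution, binomial expansion, and the upper incomplete gamma function (the paper just cites Gradshteyn--Ryzhik (3.351.1), (3.351.3), (3.381.3) for these steps). Your derivation is correct; note only that it yields the coefficient $1/[k!\,(M-k-2)!]$, so the factor $(M-k-2)$ appearing without a factorial in the lemma's printed statement is evidently a typo rather than a discrepancy in your argument.
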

\begin{proof}
	See Appendix C.
\end{proof}	

The above result indicates that the eavesdropping success probability is positively correlated with the jamming power $P_J$ and negatively correlated with the channel quality parameter $\rho_{sd}$.

\section{simulation results}
In this section, simulation results are presented to demonstrate the effectiveness of the proposed algorithm. Unless otherwise specified, the numbers of transmit/receive antennas and RF chains at the legitimate monitor are $N=128$ and $M=4$, respectively. Besides, the noise variances are normalized such that $\sigma^2=1$ and $\widetilde{\sigma}^2=2$. The parameters related to the SINR constraints are set as $\gamma_s=0\ \text{dB}$ and $\gamma_r=10\ \text{dB}$, and the SNR of the suspicious transmitter and the overall system are given by $\dfrac{p_s} {\sigma^2}=10\ \text{dB}$ and $\dfrac{p_{\text{max}}} {N\sigma^2}=20\ \text{dB}$, respectively. In addition, we set the channel quality parameters $\rho_{sd}=10$, $\rho_{ed}=1$, and $\rho_{se}=1$ and the time ratios $\lambda_r=0.1$ and $\lambda_w=0.9$, respectively.

We compare the performance based on the following two benchmark algorithms.
\begin{itemize}
	\item[$\bullet$]Maximum-ratio combining (MRC) algorithm \cite{pe4}. The legitimate surveillance system and the radar function both design their digital receive beamforming vectors by using the MRC algorithm. The MRC algorithm directly maximizes the receiving signal gains, but the interference between the legitimate surveillance system and the radar function is not eliminated. Specifically, the MRC algorithm uses ${\bm w}_{s,n} = \dfrac{\widetilde{{\bm U}}_n^H{\bm h}_{se}}{\|\widetilde{{\bm U}}_n^H{\bm h}_{se}\|}$ and ${\bm w}_{r,n} = \dfrac{\widetilde{{\bm U}}_n^H{\bm A}_n{\bm U}_n{\bm p}_{n,r}}{\|\widetilde{{\bm U}}_n^H{\bm A}_n{\bm U}_n{\bm p}_{n,r}\|}$.
	
	\item[$\bullet$]Surveillance centric algorithm. In this algorithm, the CSI for the radar function is unknown to the legitimate surveillance system. Thus, the digital receive beamforming vector for proactive eavesdropping is designed by employing the MRC algorithm, while the digital receive beamforming vector for radar detection is still designed in the null-space of the surveillance signal.
\end{itemize}

\begin{figure}[t]
	\vspace{-2ex}
	\centering
	\includegraphics[width=0.8\linewidth]{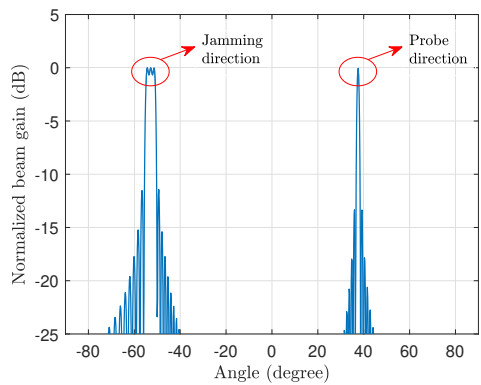}
	\vspace{-1ex}
	\caption{Beampattern of a specific transmit signal before power allocation.}
	\label{beam}
	\vspace{-2ex}
\end{figure}

We first show the transmit signal beampattern of the legitimate monitor in Fig. \ref{beam}. It can be observed that the beam gain mainly focuses on two directions, which indicates simultaneous jamming and radar detection. Besides, it is obvious that there are $3$ beam gain peaks aiming at the jamming direction and only $1$ peak aiming at the probe direction. The reason is that, $3$ RF chains are utilized for surveillance and $1$ RF chain is used for radar detection when there are $M=4$ RF chains. In addition, it is noted that, due to the large number of antennas, the beam gain aiming at non-target directions is quite low, leading to the high spatial resolution of the transmit signal.

\begin{figure}[t]
	\vspace{-2ex}
	\centering
	\includegraphics[width=0.8\linewidth]{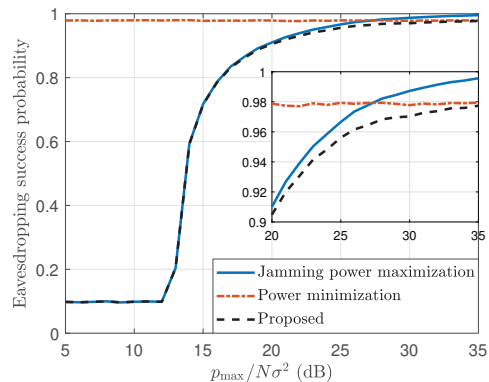}
	\vspace{-1ex}
	\caption{The switching between the solutions of the two cases.}
	\label{switch}
	\vspace{-2ex}
\end{figure}

In Fig. \ref{switch}, the classification of the two subproblems and the switching between the solutions corresponding to the two cases are depicted. Note that the solutions obtained by the jamming power maximization scheme are only corresponding to the jamming power maximization problem and the solutions obtained by the power minimization scheme are only corresponding to the power minimization problem, while the proposed scheme is a combination of these two benchmark schemes and obtains the solutions based on correct classification.	It can be seen that for the jamming power maximization scheme, the eavesdropping success probability continuously increases with the increase of $p_{\text{max}}$, due to the increase in jamming power. On the other hand, for the power minimization scheme, the eavesdropping success probability remains constant since it assumed that $\text{SINR}_\text{D}$ always equals $\gamma_s$. When $p_{\text{max}}$ is low, i.e., the transmit power is inadequate, the eavesdropping success probability maximization problem should be classified as the jamming power maximization problem, and the corresponding solutions should be applied. As $p_{\text{max}}$ keeps increasing and exceeds the threshold, the proposed algorithm is switched into the case corresponding to the power minimization problem and applies its solutions. Meanwhile, it can be observed that the two benchmark schemes achieve better performance since they may not find the feasible solution to the eavesdropping success probability maximization problem. The solutions obtained by the proposed scheme are always based on correct classification results, but the solutions obtained by the jamming power maximization scheme are only corresponding to the jamming power maximization problem even for the power minimization problem, as well as the power minimization scheme. Therefore, the solutions obtained by the two benchmark schemes may be based on the incorrect classification, and they are not feasible since they do not satisfy all the constraints, thus leading to an extra performance gain.

\begin{figure}[t]
	\vspace{-2ex}
	\centering
	\includegraphics[width=0.8\linewidth]{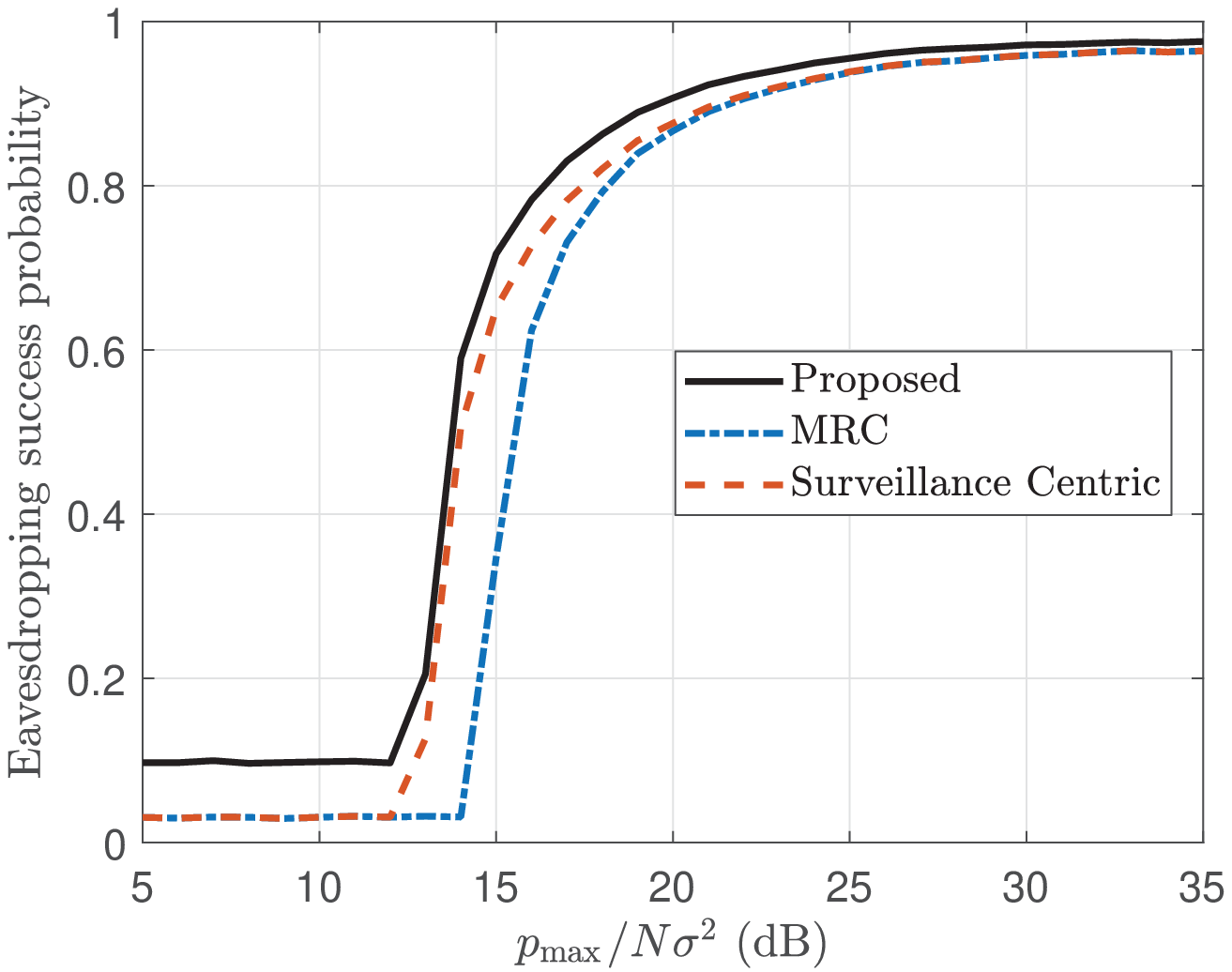}
	\vspace{-1ex}
	\caption{Eavesdropping success probability comparison of the proposed, MRC, and surveillance centric algorithms versus ${p_{\text{max}}}$.}
	\label{pmax}
	\vspace{-2ex}
\end{figure}

Fig. \ref{pmax} compares the eavesdropping success probability of the proposed, MRC, and surveillance centric algorithms versus ${p_{\text{max}}}$. It can be seen that our proposed algorithm achieves higher eavesdropping success probabilities than other baseline algorithms. Besides, the simulation results also indicate the impact of ${p_{\text{max}}}$ on the eavesdropping success probability. First, when ${p_{\text{max}}}$ is quite low, the eavesdropping success probability of these three algorithms is kept at a certain low level. This is because when the transmit power is insufficient, all power is allocated to the probe signal to guarantee the performance of radar detection. Thus, since the suspicious transmission usually occupies a better channel, it is difficult to successfully eavesdrop the suspicious transmission without jamming. Meanwhile, because of the null-space interference elimination approach, our proposed algorithm still outperforms the compared algorithms. With the increase of ${p_{\text{max}}}$, more and more power is allocated to the probe signal until the detection performance meets the requirement. Then, the rest of the transmit power starts to be allocated for jamming the suspicious receiver and, as a consequence, the eavesdropping success probability increases. Finally, when the transmit power becomes sufficient, limited by the minimum essential monitoring rate constraint, $\text{SINR}_\text{D}$ reaches $\gamma_s$ and the eavesdropping success probability converges to a certain high level.

\begin{figure}[t]
	\vspace{-2ex}
	\centering
	\includegraphics[width=0.8\linewidth]{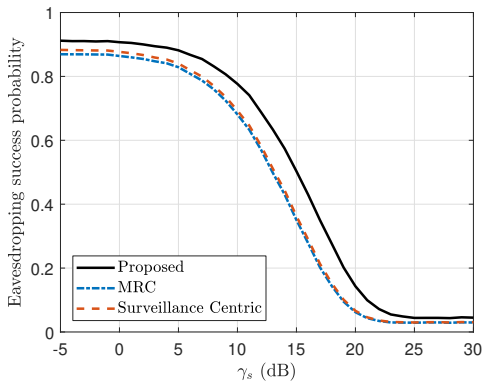}
		\vspace{-1ex}
	\caption{Eavesdropping success probability comparison of the proposed, MRC, and surveillance centric algorithms versus $\gamma_s$.}
	\label{gamma}
	\vspace{-2ex}
\end{figure}

Fig. \ref{gamma} depicts the relation between the eavesdropping success probability and the minimum essential monitoring rate $\gamma_s$. We observe that when $\gamma_s$ increases, the eavesdropping success probability first remains nearly unchanged and then decreases rapidly. The reason is that when $\gamma_s$ is quite low, it is unreachable for $\text{SINR}_\text{D}$ and does not affect the system. Hence, the system tends to utilize all power to decrease the rate at the suspicious receiver, which leads to the eavesdropping success probability being close to $100\%$. However, when $\gamma_s$ keeps increasing, it becomes accessible and the jamming power becomes limited. Therefore, with the increase of $\gamma_s$, the jamming power decreases, which results in the reduction of the eavesdropping success probability. Besides, it can be observed that our proposed algorithm provides the best performance among the analyzed algorithms when $\gamma_s$ increases.

\begin{figure}[t]
	\vspace{-2ex}
	\centering
	\includegraphics[width=0.8\linewidth]{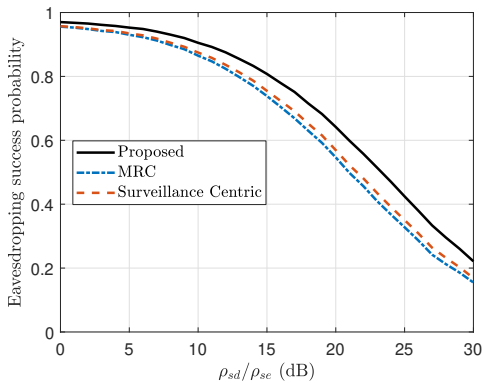}
		\vspace{-1ex}
	\caption{Eavesdropping success probability comparison of the proposed, MRC, and surveillance centric algorithms versus $\rho_{sd}$.}
	\label{rho}
	\vspace{-2ex}
\end{figure}

Fig. \ref{rho} depicts the impact of the channel quality on the eavesdropping success probability, where $\dfrac{\rho_{sd}}{\rho_{se}}$ represents the quality of the suspicious channel for legitimate eavesdropping. Generally, the legitimate monitor may be far away from the suspicious transmitter to avoid getting exposed, which leads to a worse quality of the legitimate eavesdropping channel compared to the suspicious transmission channel. Thus, with the increase of $\dfrac{\rho_{sd}}{\rho_{se}}$, the suspicious channel quality becomes better and eavesdropping becomes more difficult. The simulation results also verify the negative correlation between the eavesdropping success probability and $\dfrac{\rho_{sd}}{\rho_{se}}$. Besides, we can again see the superior performance of our proposed algorithm.

\begin{figure}[t]
	\vspace{-2ex}
	\centering
	\includegraphics[width=0.8\linewidth]{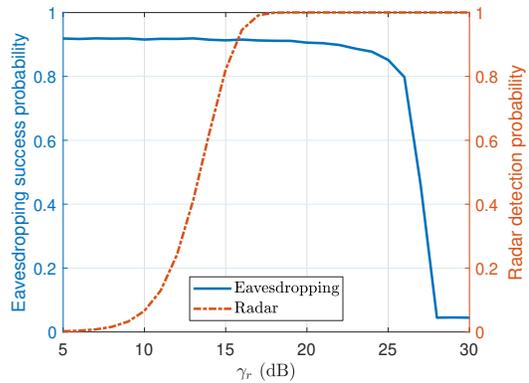}
		\vspace{-1ex}
	\caption{The trade-off between the eavesdropping success probability and radar detection probability versus $\gamma_r$.}
	\label{radar}
	\vspace{-2ex}
\end{figure}

Finally, in Fig. \ref{radar}, we illustrate the trade-off between the legitimate surveillance system and the radar function. The detection probability of the radar function is obtained and shown in Fig. \ref{radar}, according to \cite{radarprob1} and \cite{radarprob2}. We can observe that with the increase of $\gamma_r$, the radar detection probability increases from 0 to 1 and then stays constant. However, in contrast, the eavesdropping success probability will decrease rapidly if $\gamma_r$ reaches a certain high level. In fact, Fig. \ref{radar} shows the coexistence between the legitimate surveillance and the radar function. For the radar function, a larger $\gamma_r$ corresponds to a higher $\text{SINR}_\text{R}$ and leads to a greater radar detection probability. On the other hand, since the total power is finite, a larger $\gamma_r$ also indicates less jamming power and lower eavesdropping success probability. Therefore, $\gamma_r$ should be determined by balancing the eavesdropping success probability and the radar detection probability.

\section{conclusion}
In this paper, we have investigated a wireless legitimate surveillance system with radar function, in which we aimed at maximizing the eavesdropping success probability. We formulated the optimization problem by transforming the eavesdropping success probability into the difference of SINRs subject to the performance requirements of radar detection and surveillance. To tackle this challenging problem, we employed the Rayleigh quotient to simplify the problem and obtained the optimal receive beamforming vectors in closed-form. Then, we applied the divide-and-conquer principle to decompose the problem into two subproblems according to two different cases. The first subproblem aims at minimizing the total transmit power and the second subproblem focuses on maximizing the jamming power. For both subproblems, the optimal digital transmit beamforming vector was obtained in closed-form with the help of orthogonal decomposition. Combining the two cases, we developed the optimal beamforming algorithm. Additionally, some insightful results regarding the optimal transmit power allocation, the threshold between the two cases, and the probability analysis have also been discussed. Our simulation results demonstrated that the proposed algorithm achieves better performances than key baseline algorithms.

\vspace{-4.5mm}

\begin{appendices}
	\section{Proof of lemma 3}
Based on (\ref{bases}) and (\ref{basesip}), we construct another orthogonal basis ${\bm v}_{\text{new},n}$ as follows
\begin{align}\label{vnew}
{\bm v}_{\text{new},n} = -\sqrt{\frac{g_{\text{radar},n}}{g_{\text{jam}}}}{\bm v}_{\text{sum},n}+\sqrt{\frac{g_{\text{sum},n}}{g_{\text{jam}}}}{\bm v}_{\text{radar}},
\end{align}
where ${\bm v}_{\text{new},n}$ is normalized and orthogonal to ${\bm v}_{\text{sum},n}$.

Then, ${\bm p}_{n,r}$ and $\widetilde{\bm p}_{n,w}$ can be decomposed by ${\bm v}_{\text{sum},n}$, ${\bm v}_{\text{new},n}$, and ${\bm v}_{\text{jam}}$ as
\begin{align}\label{pdcmp1}
	{\bm p}_{n,r}&=p_{\text{sum},n}{\bm v}_{\text{sum},n}+p_{\text{new},n}{\bm v}_{\text{new},n}+{\bm r}_{1,n},\notag \\ \widetilde{\bm p}_{n,w}&=p_{\text{jam},n}{\bm v}_{\text{jam}}+{\bm r}_{0,n},
\end{align}
where ${\bm r}_{1,n}$ and ${\bm r}_{0,n}$ represent the components orthogonal to the subspace determined by ${\bm v}_{\text{sum},n}$ and ${\bm v}_{\text{new},n}$, respectively. Besides, $p_{\text{sum},n},p_{\text{new},n}$, and $p_{\text{jam},n}$ are scalars denoting
the modulus of each direction basis.

By employing (\ref{pdcmp1}), problem (\ref{power}) can be simplified into the following problem:
\begin{subequations}\label{p4}
	\begin{eqnarray}
		&\hspace{-4ex}\min\limits_{\left\{\substack{p_{\text{sum},n},p_{\text{new},n},\\p_{\text{jam},n},{\bm r}_{1,n},\\{\bm r}_{0,n}}\right\} }&\hspace{-2ex}\sum\limits^{N}_{n=1}(\lambda_r(p_{\text{sum},n}^2+p_{\text{new},n}^2 +\|{\bm r}_{1,n}\|^2)\notag \\
		&&\hspace{-4ex}+\lambda_w(p_{\text{jam},n}^2+\|{\bm \Sigma}{\bm r}_{0,n}\|^2))\\
		&\hspace{-4ex}\text{s.t.}&\hspace{-2ex}\sum\limits^{N}_{n=1}\hspace{-0.5ex}(\lambda_rp_{\text{sum},n}^2g_{\text{sum},n}\hspace{-0.5ex}+\hspace{-0.5ex}\lambda_wp_{\text{jam},n}^2g_{\text{jam}})\hspace{-0.5ex}=\hspace{-0.5ex} C_1,\notag \\
		&&\hspace{-2ex}\\
		&&\hspace{-2ex}\sqrt{\frac{g_{\text{radar},n}}{g_{\text{sum},n}}}p_{\text{sum},n}\hspace{-0.5ex}+\hspace{-0.5ex}\sqrt{\hspace{-0.5ex}\frac{g_{\text{jam}}}{g_{\text{sum},n}}}p_{\text{new},n}\hspace{-0.5ex}\geq \hspace{-0.5ex}C_{2,n}, \notag \\
		&&\hspace{-2ex} ~~~~ \forall n\hspace{-0.5ex}=\hspace{-0.5ex}1,...,N,\\
		&&\hspace{-2ex}p_{\text{sum},n}\geq 0,p_{\text{new},n}\geq 0,p_{\text{jam},n}\geq 0,\notag \\
		&&\hspace{-2ex} ~~~~ \forall n\hspace{-0.5ex}=\hspace{-0.5ex}1,...,N,
	\end{eqnarray}
\end{subequations}
where $C_1 \triangleq \dfrac{N|h_{sd}|^2p_s}{\gamma_s}-N{\sigma}^2$ and $C_{2,n} \triangleq \sqrt{\dfrac{\gamma_r{\widetilde{\sigma}^2}}{{\beta}_n^2g_n}}, \forall n=1,\ldots, N$.

Apparently, ${\bm r}_{1,n}$ and ${\bm r}_{0,n}$ only exist in (\ref{p4}a), where their powers are required to be minimized, thus it is always optimal to set ${\bm r}_{1,n}=\bm{0}$ and ${\bm r}_{0,n}=\bm{0}$.
Besides, since constraint (\ref{p4}b) is non-convex, to proceed, we employ the following variable substitutions:
$x_{\text{sum},n}=p_{\text{sum},n}^2$, $x_{\text{new},n}=p_{\text{new},n}^2$, and $x_{\text{jam},n}=p_{\text{jam},n}^2.$ Then, problem (\ref{p4}) is equivalently transformed into a convex problem as follows
\begin{subequations}\label{px1}
	\begin{eqnarray}
		&\hspace{-2ex}\min\limits_{\left\{\substack{x_{\text{sum},n},x_{\text{new},n},\\x_{\text{jam},n}}\right\} }&\hspace{-2ex}\sum\limits^{N}_{n=1}(\lambda_r(x_{\text{sum},n}+x_{\text{new},n})+\lambda_wx_{\text{jam},n})\\
		&\hspace{-2ex}\text{s.t.}&\hspace{-2ex}\sum\limits^{N}_{n=1}(\lambda_rx_{\text{sum},n}g_{\text{sum},n}+\lambda_wx_{\text{jam},n}g_{\text{jam}})= C_1,\notag \\
		&& \hspace{-2ex}\\
		&&\hspace{-2ex}\sqrt{\frac{g_{\text{radar},n}}{g_{\text{sum},n}}}\sqrt{x_{\text{sum},n}}+\sqrt{\frac{g_{\text{jam}}}{g_{\text{sum},n}}}\sqrt{x_{\text{new},n}}\geq C_{2,n},\notag \\
		&&\hspace{-2ex} ~~~~\forall n=1,...,N,\\
		&&\hspace{-2ex}x_{\text{sum},n}\geq 0,x_{\text{new},n}\geq 0,x_{\text{jam},n}\geq 0,\notag \\
		&&\hspace{-2ex}~~~~\forall n=1,...,N.
	\end{eqnarray}
\end{subequations}

The Lagrangian multiplier function for problem (\ref{px1}) can be expressed as
\begin{align}
	\!\mathcal L\
	=&\sum\limits^{N}_{n=1}(\lambda_r(x_{\text{sum},n}+x_{\text{new},n})+\lambda_wx_{\text{jam},n})\notag\\
	&-\sum\limits^{N}_{n=1}(\mu_{1,n}x_{\text{sum},n}+\mu_{2,n}x_{\text{new},n}+\mu_{3,n}x_{\text{jam},n})\notag\\
	&+\mu_4 (\sum\limits^{N}_{n=1}(\lambda_rx_{\text{sum},n}g_{\text{sum},n}+\lambda_wx_{\text{jam},n}g_{\text{jam}})-C_1)\notag\\
	&-\sum\limits^{N}_{n=1}\hspace{-0.5ex}\mu_{5,n}\hspace{-0.5ex}\left(\hspace{-0.5ex}\sqrt{\frac{g_{\text{radar},n}}{g_{\text{sum},n}}}\sqrt{x_{\text{sum},n}}\right. \left.\hspace{-0.5ex}+\hspace{-0.5ex}\sqrt{\frac{g_{\text{jam}}}{g_{\text{sum},n}}}\sqrt{x_{\text{new},n}}\hspace{-0.5ex}-\hspace{-0.5ex}C_{2,n}\hspace{-0.5ex}\right),\!
\end{align}
where $\mu_{1,n}\geq 0,\mu_{2,n}\geq 0,\mu_{3,n}\geq 0,\mu_{4} \ne 0$, and $\mu_{5,n}\geq 0$ denote the dual variables of problem (\ref{px1}) associated with the constraints in (\ref{px1}b), (\ref{px1}c), and (\ref{px1}d), respectively. As problem (\ref{px1}) is convex and satisfies Slater's condition, the KKT conditions are necessary and sufficient for establishing the optimal solution. The KKT conditions for problem (\ref{px1}) are given by
\begin{subequations}\label{kkt1}
	\begin{eqnarray}
		&&\hspace{-2ex}\frac{\partial \mathcal L}{\partial x_{\text{sum},n}}\hspace{-0.5ex}=\hspace{-0.5ex}\lambda_r \hspace{-0.5ex}-\hspace{-0.5ex}\mu_{1,n}\hspace{-0.5ex}+\hspace{-0.5ex}\mu_4 \lambda_rg_{\text{sum},n} \hspace{-0.5ex}-\hspace{-0.5ex}\frac{\mu_{5,n}\sqrt{g_{\text{radar},n}}}{2\sqrt{g_{\text{sum},n}x_{\text{sum},n}}}\hspace{-0.5ex}=\hspace{-0.5ex}0,\notag\\
		&&\hspace{7.5ex}\forall n=1,...,N,\\
		&&\hspace{-2ex}\frac{\partial \mathcal L}{\partial x_{\text{jam},n}}\hspace{-0.5ex}=\hspace{-0.5ex}\lambda_w\hspace{-0.5ex}-\hspace{-0.5ex}\mu_{3,n}\hspace{-0.5ex}+\hspace{-0.5ex}\mu_4 \lambda_wg_{\text{jam}} \hspace{-0.5ex}=\hspace{-0.5ex}0,\forall n=1,...,N,\\
		&&\hspace{-2ex}\mu_{3,n}x_{\text{jam},n}=0,\forall n=1,...,N.
	\end{eqnarray}
\end{subequations}

As shown in (\ref{kkt1}c), either $\mu_{3,n}$ or $x_{\text{jam},n}$ must be equal to $0$. To reduce ambiguity, we assume that there exists a specific $n$ satisfying $\mu_{3,n}=0$. Then, in (\ref{kkt1}b), $\mu_{4}=-\dfrac{1}{g_{\text{jam}}}$ is obtained. Thus, in (\ref{kkt1}a), we have:
\begin{align} \label{contradiction1}
\frac{\lambda_r(g_{\text{jam}}-g_{\text{sum},n})}{g_{\text{jam}}}-\mu_{1,n}-\frac{\mu_{5,n}\sqrt{g_{\text{radar},n}}}{2\sqrt{g_{\text{sum},n}x_{\text{sum},n}}}=0.
\end{align}

However, since $\forall n=1,\ldots, N$, $g_{\text{jam}} < g_{\text{sum},n}$, $\lambda_r > 0$, $\mu_{1,n}\geq 0$, $\mu_{5,n}\geq 0$, and $\dfrac{\sqrt{g_{\text{radar},n}}}{2\sqrt{g_{\text{sum},n}x_{\text{sum},n}}} > 0$, (\ref{contradiction1}) cannot be satisfied, thus the assumption $\mu_{3,n}=0$ is incorrect for all $n$, which means that $x_{\text{jam},n}=0,\forall n=1,\ldots, N.$

Therefore, we prove that in problem (\ref{power}), ${\bm p}_{n,r}$ has only components lying in the subspace determined by ${\bm v}_{\text{radar}}$ and ${\bm v}_{\text{jam}}$, and $\widetilde{\bm p}_{n,w}=\bm{0},\forall n=1,\ldots, N.$ This completes the proof.

\vspace{-1ex}
\section{Proof of theorem 1}

For problem (\ref{power2}), we employ the following variable substitutions: $x_{\text{jam},n}=p_{\text{jam},n}^2$ and $x_{\text{radar},n}=p_{\text{radar},n}^2$.

Similar to problem (\ref{px1}), the problem is transformed into a convex problem as
\begin{subequations}\label{px2}
	\begin{eqnarray}
		&\hspace{-6ex}\min\limits_{\{x_{\text{jam},n},x_{\text{radar},n}\} }&\hspace{-2ex}\sum\limits^{N}_{n=1}\lambda_r(x_{\text{jam},n}+x_{\text{radar},n})\\
		&\hspace{-6ex}\text{s.t.}&\hspace{-2ex}\sum\limits^{N}_{n=1}\lambda_r(x_{\text{jam},n}g_{\text{jam}}\hspace{-0.5ex}+\hspace{-0.5ex}x_{\text{radar},n}g_{\text{radar},n})= C_1,\\
		&&\hspace{-2ex}x_{\text{radar},n}\geq C_{2,n}^2, \forall n=1,...,N,\\
		&&\hspace{-2ex}x_{\text{jam},n}\geq 0,x_{\text{radar},n}\geq 0,\forall n=1,...,N.
	\end{eqnarray}
\end{subequations}

\vspace{-1ex}
The Lagrangian multiplier function for problem (\ref{px2}) can be expressed as
\begin{align}
	\mathcal L &=\sum\limits^{N}_{n=1}\lambda_r(x_{\text{jam},n}+x_{\text{radar},n}) -\sum\limits^{N}_{n=1}(\mu_{1,n}x_{\text{jam},n}+\mu_{2,n}x_{\text{radar},n})\notag\\
	&+\mu_3 (\sum\limits^{N}_{n=1}\lambda_r(x_{\text{jam},n}g_{\text{jam}}+x_{\text{radar},n}g_{\text{radar},n})-C_1)\notag \\
	&-\sum\limits^{N}_{n=1}\mu_{4,n}(x_{\text{radar},n}-C_{2,n}^2),
\end{align}
where $\mu_{1,n}\geq 0,\mu_{2,n}\geq 0,\mu_{3} \ne 0$, and $\mu_{4,n}\geq 0$ denote the dual variables associated with (\ref{px2}b), (\ref{px2}c), and (\ref{px2}d), respectively. The KKT conditions for problem (\ref{px2}) are given by
\vspace{-1ex}
\begin{subequations}\label{kkt2}
	\begin{eqnarray}
		&&\hspace{-8ex}\frac{\partial \mathcal L}{\partial x_{\text{jam},n}}\hspace{-0.8ex}=\hspace{-0.5ex}\lambda_r\hspace{-0.5ex}-\hspace{-0.5ex}\mu_{1,n} \hspace{-0.5ex}+\hspace{-0.5ex}\mu_3\lambda_rg_{\text{jam}}\hspace{-0.5ex}=\hspace{-0.8ex}0,\forall n\hspace{-0.8ex}=\hspace{-0.5ex}1,...,N,\\
		&&\hspace{-8ex}\frac{\partial \mathcal L}{\partial x_{\text{radar},n}}\hspace{-0.8ex}=\hspace{-0.5ex}\lambda_r\hspace{-0.5ex}-\hspace{-0.5ex}\mu_{2,n} \hspace{-0.5ex}+\hspace{-0.5ex}\mu_3\lambda_rg_{\text{radar},n}\hspace{-0.5ex}-\hspace{-0.5ex}\mu_{4,n}\hspace{-0.8ex}=\hspace{-0.5ex}0,\notag \\
		&&\hspace{2ex}\forall n\hspace{-0.8ex}=\hspace{-0.5ex}1,...,N,\\
		&&\hspace{-8ex}\mu_{2,n}x_{\text{radar},n}=0,\forall n=1,...,N,\\
		&&\hspace{-8ex}\mu_{4,n}(x_{\text{radar},n}-C_{2,n}^2)=0,\forall n=1,...,N.
	\end{eqnarray}
\end{subequations}

\vspace{-2ex}
Based on (\ref{kkt2}), as $C_{2,n}^2=\dfrac{\gamma_r{\widetilde{\sigma}^2}}{{\beta}_n^2g_n} > 0$, (\ref{px2}c) indicates that $x_{\text{radar},n}\geq C_{2,n}^2 > 0, \forall n=1,...,N$, thus $\mu_{2,n}=0,\forall n=1,...,N$ can be obtained by employing (\ref{kkt2}c). Besides, in (\ref{kkt2}d), either $\mu_{4,n}$ or $x_{\text{radar},n}-C_{2,n}^2$ must be equal to 0. To reduce ambiguity, we assume that there exists a specific $n$ satisfying $\mu_{4,n}=0$. Then, in (\ref{kkt2}b), $\mu_{3}=-\dfrac{1}{g_{\text{radar},n}}$ is obtained. Thus, in (\ref{kkt2}a), we have:
\begin{align} \label{contradiction2}
\frac{\lambda_r(g_{\text{radar},n}-g_{\text{jam}})}{g_{\text{radar},n}}-\mu_{1,n}=0.
\end{align}

However, since $\forall n=1,\ldots, N$, $g_{\text{radar},n} < g_{\text{jam}}$, $\lambda_r > 0$, and $\mu_{1,n}\geq 0$, (\ref{contradiction2}) cannot be satisfied, thus the assumption $\mu_{4,n}=0$ does not hold for all $n$, which means that $x_{\text{radar},n}=C_{2,n}^2,\forall n=1,\ldots, N$. Based on (\ref{power2}b), the solutions of $p_{\text{jam},n}$ and $p_{\text{radar},n}$ are given by
\begin{subequations}
	\begin{eqnarray}
		&p_{\text{jam},n}=& \sqrt{\frac{|h_{sd}|^2p_s-{\sigma}^2\gamma_s}{\gamma_s\lambda_rg_{\text{jam}}}-\frac{1}{Ng_{\text{jam}}}\sum\limits^{N}_{n=1}\frac{\gamma_r{\widetilde{\sigma}^2}g_{\text{radar},n}}{{\beta}_n^2g_n}},\notag \\
		&&\forall n=1,...,N,\\
		&p_{\text{radar},n}=& \sqrt{ \frac{\gamma_r{\widetilde{\sigma}^2}}{{\beta}_n^2g_n}}, \forall n=1,...,N.
	\end{eqnarray}
\end{subequations}

This completes the proof.

\section{Proof of lemma 4}
The proof can be established based on a similar method in \cite{pe4}. Define $\gamma_{se}=\|{\bm \Sigma}\widetilde{{\bm U}}_n^H{\bm h}_{se}\|^2$, $\gamma_{sd}=|h_{sd}|^2$, and $\gamma_{ed}=g_{\text{jam}}.$ Then, the eavesdropping success probability can be written as
\begin{align}\label{apdx_c_1}
	\mathbb{E}\{Y\} = \mathcal{P}\left({\frac{p_s}{\widetilde{\sigma}^2} \gamma_{se} \geq \frac{\dfrac{p_s}{{\sigma}^2} \gamma_{sd}}{\dfrac{P_J}{{\sigma}^2} \gamma_{ed}+1}}\right).
\end{align}

Based on (\ref{fse}) and (\ref{fed}), conditioned on $\gamma_{se}$ and $\gamma_{ed}$ with the aid of \cite[eq. (3.351.1)]{prob24}, (\ref{apdx_c_1}) can be simplified to
\begin{align}\label{apdx_c_2}
	\mathbb{E}\{Y\} =1- \text{exp}\left({-\frac{\gamma_{se} {\sigma}^2}{\rho_{sd}\widetilde{\sigma}^2} \left(
		{
		\frac{P_J}{{\sigma}^2} \gamma_{ed} +1
	}
		\right)
		}\right).
\end{align}

Then, utilizing \cite[eq. (3.351.3)]{prob24}, (\ref{apdx_c_2}) is further simplified to
\begin{align}\label{apdx_c_3}
	\mathbb{E}\{Y\} =1- \left({1+\frac{\rho_{se} {\sigma}^2}{\rho_{sd}\widetilde{\sigma}^2} \left(
		{
			\frac{P_J}{{\sigma}^2} \gamma_{ed} +1
		}
		\right)
	}\right)^{-(M-1)}.
\end{align}

Applying (\ref{fed}), (\ref{apdx_c_3}) is transformed as
\begin{align}\label{apdx_c_4}
	\mathbb{E}\{Y\} =&1- \int_{0}^{\infty}	
	\frac{(\dfrac{\rho_{sd}\widetilde{\sigma}^2}{\rho_{se} P_J})^{M-1}}
	{(x+\dfrac{{\sigma}^2}{P_J}+\dfrac{\rho_{sd}\widetilde{\sigma}^2}{\rho_{se} P_J})^{M-1}} \notag\\
	&\times
	\frac{x^{M-2}}{\rho_{ed}^{M-1}{\Gamma}(M-1)} e^{-\frac{x}{\rho_{ed}}}
	\text{d}x.
\end{align}

Finally, employing $t=x+\dfrac{{\sigma}^2}{P_J}+\dfrac{\rho_{sd}\widetilde{\sigma}^2}{\rho_{se} P_J}$ and the binomial expansion, the desired result can be obtained by utilizing \cite[eq. (3.381.3)]{prob24}. This completes the proof.

\end{appendices}


\vspace{-2mm}
\bibliography{references}

\end{document}